\pgfplotsset{compat = newest}
\theoremstyle{plain}
\newtheorem{theorem}{Theorem}
\newtheorem{lemma}{Lemma}
\newtheorem{corollary}{Corollary}[theorem]
\theoremstyle{definition}
\newtheorem{definition}{Definition}
\theoremstyle{remark}
\newtheorem{remark}{Remark}
\newcommand{\refappendix}[1]{\hyperref[#1]{Appendix~\ref*{#1}}}
\title{Multi-Access Distributed Computing}
\author{Federico Brunero\textsuperscript{\orcidlink{0000-0002-6980-3827}} and Petros Elia\textsuperscript{\orcidlink{0000-0002-3531-120X}}%
    \thanks{%
    This work was supported by the European Research Council (ERC) through the EU Horizon 2020 Research and Innovation Program under Grant 725929 (Project DUALITY).}
    \thanks{%
    The authors are with the Communication Systems Department at EURECOM, 450 Route des Chappes, 06410 Sophia Antipolis, France (email: brunero@eurecom.fr; elia@eurecom.fr).
  }
}
\begin{document}

\maketitle

\begin{abstract}
    Coded distributed computing (CDC) is a new technique proposed with the purpose of decreasing the intense data exchange required for parallelizing distributed computing systems. Under the famous MapReduce paradigm, this coded approach has been shown to decrease this communication overhead by a factor that is linearly proportional to the overall computation load during the mapping phase. Nevertheless, it is widely accepted that this overhead remains a main bottleneck in distributed computing. To address this, we take a new approach and we explore a new system model which, for the same aforementioned overall computation load of the mapping phase, manages to provide astounding reductions of the communication overhead and, perhaps counterintuitively, a substantial increase of the computational parallelization. In particular, we propose multi-access distributed computing (MADC) as a novel generalization of the original CDC model, where now \emph{mappers} (nodes in charge of the map functions) and \emph{reducers} (nodes in charge of the reduce functions) are distinct computing nodes that are connected through a multi-access network topology. Focusing on the MADC setting with combinatorial topology, which implies $\Lambda$ mappers and $K$ reducers such that there is a unique reducer connected to any $\alpha$ mappers, we propose a novel coded scheme and a novel information-theoretic converse, which jointly identify the optimal inter-reducer communication load, as a function of the computation load, to within a constant gap of $1.5$. Additionally, a modified coded scheme and converse identify the optimal max-link communication load across all existing links to within a gap of $4$. The unparalleled coding gains reported here should not be simply credited to having access to more mapped data, but rather to the powerful role of topology in effectively aligning mapping outputs. This realization raises the open question of which multi-access network topology guarantees the best possible performance in distributed computing.
\end{abstract}

\begin{IEEEkeywords}
  Coded distributed computing, coded multicasting, communication load, information-theoretic converse, MapReduce, communication complexity, multi-access distributed computing (MADC).
\end{IEEEkeywords}

\section{Introduction}

With the development of large-scale machine learning algorithms and applications relying heavily on large volumes of data, we are now experiencing an ever-growing need to distribute large computations across multiple computing nodes. Different computing frameworks, such as MapReduce~\cite{Dean2008MapReduceSimplifiedData} and Spark~\cite{Zaharia2010SparkClusterComputing}, have been proposed to address these needs, based on the aforementioned simple yet powerful concept of distributing large-scale algorithms --- to be executed over a set of input data files --- across multiple computing machines. Under the well-known MapReduce framework, the overall process is typically split in three distinct phases, starting with the \emph{map phase}, the \emph{shuffle phase} and then the \emph{reduce phase}. During the map phase, each computing node is assigned a subset of the input data files, and proceeds to apply to each locally available file certain designated map functions. The outputs of such map functions, referred to as intermediate values (IVs), are then exchanged among the computing nodes during the shuffle phase, so that each computing node can retrieve any missing, required IVs it did not compute locally. Finally, during the reduce phase, each computing node computes one (or more) output functions depending on its assigned reduce functions, each of which takes as input the IVs computed for each input file.

\subsection{Coded Distributed Computing}

Several studies have shown that the aforementioned distributed map-shuffle-reduce approach comes with bottlenecks that may severely hinder the parallelization of computationally-intensive operations. While some works~\cite{Lee2018SpeedingDistributedMachine, Kumar2014ComprehensiveReviewStraggler} focused on the impact of straggler nodes, other works have pointed out that the total execution time of a distributed computing application is often dominated by the shuffling process. For instance, the work in~\cite{Zhuoyao2013PerformanceModelingMapReduce}, having explored the behavior of several algorithms on the Amazon EC2 cluster, revealed that the communication load in the shuffle phase was in fact the dominant bottleneck in computing the above tasks in a distributed manner. Similarly, the authors in~\cite{Li2018FundamentalTradeoffComputation} observed that, for the execution of a conventional TeraSort application, more than $\SI{95}{\percent}$ of the overall execution time was spent for inter-node communication.

Motivated by this communication bottleneck in the shuffle phase, the authors in~\cite{Li2018FundamentalTradeoffComputation} introduced coded distributed computing (CDC) as a novel framework that can yield lower communication loads during data shuffling. This gain could be attributed to a careful and joint design of the map and the shuffle phases. Approaching the distributed computing problem from an information-theoretic perspective, the authors brought to light the interesting relationship between the computation load during the mapping phase, and the communication load of the shuffling step. In particular, the work in~\cite{Li2018FundamentalTradeoffComputation} revealed that if the computation load of the mapping phase is \emph{carefully} increased by a factor $r$ --- which means that each input file is mapped on average by $r$ \emph{carefully chosen} computing nodes --- then the communication load can be reduced by the same factor $r$ by employing coding techniques during the shuffle phase\footnote{This speedup factor $r$ is often referred to as the \emph{coding gain}, and it reflects the number of computing servers that simultaneously benefit from a single transmission.}. Building on the coding-based results in cache networks~\cite{MaddahAli2014FundamentalLimitsCaching, Ji2016FundamentalLimitsCaching}, the work in~\cite{Li2018FundamentalTradeoffComputation} characterized the exact information-theoretic tradeoff between this computation and communication loads under any map-shuffle-reduce scheme with uniform mapping capabilities and uniform assignment of reduce functions. 

Since its original information-theoretic inception in~\cite{Li2018FundamentalTradeoffComputation}, coded distributed computing has been explored with several variations. Such variations include heterogeneity aspects where, for example, each computing node may be assigned different numbers of files to be mapped and functions to be reduced. For such settings, novel schemes, based on hypercube and hypercuboid geometric structures, were developed in~\cite{Woolsey2021NewCombinatorialCoded, Woolsey2021CombinatorialDesignCascaded}, which managed not only to compensate for the heterogeneous nature of the considered scenarios, but to also exploit these asymmetries in order to require a smaller number of input files\footnote{It is worth noting here the importance of designing schemes that can work with a smaller number of input files. The coded scheme in~\cite{Li2018FundamentalTradeoffComputation}, albeit achieving the information-theoretic optimal, requires a number of input files that increases exponentially with the number of computing nodes. This may entail some limitations when finite-sized data sets are considered. Finding schemes with good performance and low file-number requirements is a research direction of significant importance.}, compared to the initial scheme in~\cite{Li2018FundamentalTradeoffComputation}. Regarding this problem of requiring a large number of input files, it is worth also mentioning the work in~\cite{Parrinello2018CodedDistributedComputing}, where the authors proposed a system model for distributed computing, where the required number of input files was lowered dramatically under an assumption of a multi-rank wireless network.

Some additional works explored the scenario where the computing servers communicate with each other through switch networks~\cite{Wan2020TopologicalCodedDistributed} or in the presence of a randomized connectivity~\cite{Srinivasavaradhan2018DistributedComputingTrade}, whereas some other works further investigated distributed computing over wireless channels~\cite{Li2017ScalableFrameworkWireless}, as well as explored the interesting scenario where each computing node might have limited storage and computational resources~\cite{Yan2018StorageComputationCommunication, Yan2018StorageComputationCommunicationa}. A comprehensive survey on CDC is nicely presented in~\cite{Ng2021ComprehensiveSurveyCoded}.

\subsection{Contributions}

In this work, we propose the new multi-access distributed computing (MADC) model, which can be considered as an extension of the original setting introduced in~\cite{Li2018FundamentalTradeoffComputation}, and which entails \emph{mappers} (map nodes) being connected to various \emph{reducers} (reduce nodes), and where these mappers and reducers are now distinct entities\footnote{This choice is reasonable if we think of mappers as computing nodes that are specialized in evaluating the map functions, and of reducers as computing nodes that are specialized in evaluating the reduce functions~\cite{Shan2010FpmrMapreduceFramework, Choi2014MapReduceProcessing}.}. As is common, mappers are in charge of mapping subsets of the input files, whereas the reducers are in charge of collecting the IVs in order to compute the reduce functions. We will here focus on the so-called \emph{combinatorial topology} which will define how the mappers are connected to the reducers. This is a widely studied topology in other settings outside of distributed computing~\cite{Muralidhar2021MaddahAliNiesen, Brunero2021FundamentalLimitsCombinatorial, Vaidya2022MultiAccessCache}, and it will --- as we will discover later on --- allow for stunning gains. Under such combinatorial topology, we consider $\Lambda$ map nodes and $K \geq \Lambda$ reduce nodes, where each map node maps a subset of the input files, where each reduce node is connected to $\alpha$ map nodes, and where there is exactly one reducer for each subset of $\alpha$ map nodes. Each reducer can retrieve intermediate values only from the mappers it is connected to, whereas these same reducers can exchange via an error-free shared-link broadcast channel the remaining required intermediate values. A simple schematic of the model is shown in \Cref{fig: Multi-Access Distributed Computing Example} for the case $\Lambda = 4$, $\alpha = 2$ and $K = 6$.

\begin{figure}
    \centering
    \tikzset{cnode/.style={draw, circle, inner sep = 0, minimum size = 1cm}}
    \tikzset{snode/.style={draw, rectangle, text centered}}
    \begin{tikzpicture}
        \draw (-5, 0)--(5, 0);
        \foreach \a/\b/\c/\n/\o in {1/1/2/-5/black, 2/1/3/-3/red, 3/1/4/-1/blue, 4/2/3/1/orange, 5/2/4/3/purple, 6/3/4/5/teal}{
        \draw (\n, 0)--++(0, -0.75) node[cnode, below, \o](reducer\a){$\b\c$};
        }
        \node at (reducer1.west)[left, xshift = -0.125cm]{Reducers};
        \node at (0, 0.25)[above]{Broadcast Channel};
        \node at ($(reducer1.south) + (1.25, -0.75)$)[snode, below, minimum height = 0.75cm, minimum width = 1cm](mapper1){$1$};
        \node at ($(reducer1.south) + (3.75, -0.75)$)[snode, below, minimum height = 0.75cm, minimum width = 1cm](mapper2){$2$};
        \node at ($(reducer1.south) + (6.25, -0.75)$)[snode, below, minimum height = 0.75cm, minimum width = 1cm](mapper3){$3$};
        \node at ($(reducer1.south) + (8.75, -0.75)$)[snode, below, minimum height = 0.75cm, minimum width = 1cm](mapper4){$4$};
        \draw[semithick] (reducer1.south)--(mapper1.north); \draw[semithick] (reducer1.south)--(mapper2.north);
        \draw[semithick, red] (reducer2.south)--(mapper1.north); \draw[semithick, red] (reducer2.south)--(mapper3.north);
        \draw[semithick, blue] (reducer3.south)--(mapper1.north); \draw[semithick, blue] (reducer3.south)--(mapper4.north);
        \draw[semithick, orange] (reducer4.south)--(mapper2.north); \draw[semithick, orange] (reducer4.south)--(mapper3.north);
        \draw[semithick, purple] (reducer5.south)--(mapper2.north); \draw[semithick, purple] (reducer5.south)--(mapper4.north);
        \draw[semithick, teal] (reducer6.south)--(mapper3.north); \draw[semithick, teal] (reducer6.south)--(mapper4.north);
        \node at (mapper1.west)[left, xshift = -0.125cm]{Mappers};
    \end{tikzpicture}
    \caption{Multi-access distributed computing problem with $\Lambda = 4$ mappers and $K = 6$ reducers, where each reducer is connected exactly and uniquely to a subset of $\alpha = 2$ map nodes.}
    \label{fig: Multi-Access Distributed Computing Example}
\end{figure}
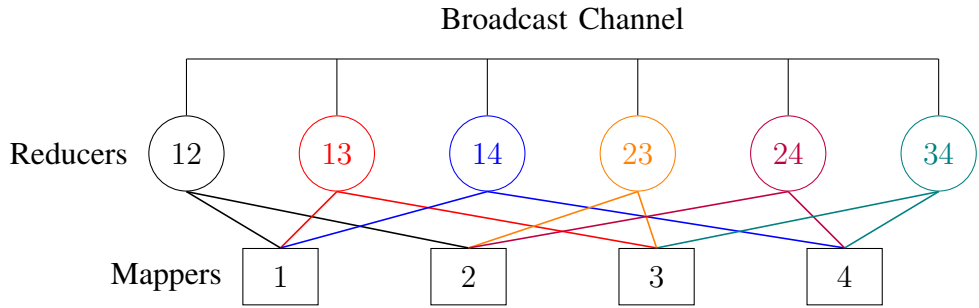

As discussed before, the communication load in the shuffle phase can represent a significant bottleneck of distributed computing. As a consequence, our objective is to minimize the volume of data exchanged by the reducers over the common-bus link during the shuffle phase, as well as the communication load between the mappers and the reducers. We start our analysis by first neglecting the communication cost between mappers and reducers, and we propose --- for the aforementioned MADC model with combinatorial topology --- a novel coded scheme that allows for efficient communication over the broadcast communication channel. For such setting, we also provide an information-theoretic lower bound on the communication load, and we show this to be within a constant multiplicative gap of $1.5$ from the achievable communication load guaranteed by the proposed coded scheme. We then proceed to also account for the download cost from mappers to reducers. For such setting, our goal is to minimize the maximal (normalized) number of bits across all links in the system. To this purpose, we introduce an additional mappers-to-reducers communication scheme and a novel converse bound which, together with the previous inter-reducer scheme, allow us to characterize the optimal max-link communication load within a constant multiplicative gap of $4$.

As suggested above, the newly derived fundamental limits suggest outstanding performance. While for any given computation load $r$ the original setting in~\cite{Li2018FundamentalTradeoffComputation} accepts a maximal coding gain of $r$, we here show that the new MADC model with combinatorial topology allows for a coding gain equal to $\left( \binom{r + \alpha}{r} - 1 \right)$, again for the same mapping cost $r$. This we believe is the first time that topology is shown to have such powerful impact in the setting of coded distributed computing.

\subsection{Paper Outline}

The rest of the paper is organized as follows. First, the system model is presented in \Cref{sec: System Model}. Next, \Cref{sec: Main Results} provides the main contributions of the paper. An illustrative example of the novel coded scheme for multi-access distributed computing is then described in \Cref{sec: Illustrative Example}. After that, the general proofs of the achievable schemes and the converse bounds are presented from \Cref{sec: Achievability Without Download Cost Proof} to \Cref{sec: Converse With Download Cost Proof}. Finally, \Cref{sec: Conclusions} concludes the paper. Some additional proofs are provided in the appendices.

\subsection{Notation}

We denote by $\mathbb{N}$ the set of non-negative integers and by $\mathbb{N}^{+}$ the set of positive integers. For $n \in \mathbb{N}^{+}$, we define $[n] \coloneqq \{1, 2, \dots, n\}$. If $a, b \in \mathbb{N}^{+}$ such that $a < b$, then $[a : b] \coloneqq \{a, a + 1, \dots, b - 1, b\}$. For sets we use calligraphic symbols, whereas for vectors we use bold symbols. Given a finite set $\mathcal{A}$, we denote by $|\mathcal{A}|$ its cardinality. For $n, m \in \mathbb{N}^{+}$, we define $[n]_{m} \coloneqq \{\mathcal{A} : \mathcal{A} \subseteq [n], |\mathcal{A}| = m\}$. We denote by $\binom{n}{k}$ the binomial coefficient and we let $\binom{n}{k} = 0$ whenever $n < 0$, $k < 0$ or $n < k$. For $n, m \in \mathbb{N}^{+}$, we denote by $\mathbb{F}^n_{2^m}$ the $n$-dimensional vector space over the finite field with cardinality $2^m$. For $n \in \mathbb{N}^{+}$, we denote by $S_n$ the group of all permutations of $[n]$.

\section{System Model}\label{sec: System Model}

The general distributed computing problem~\cite{Li2018FundamentalTradeoffComputation} consists of computing $Q$ output functions from $N$ input files with $Q, N \in \mathbb{N}^{+}$. Each file $w_n \in \mathbb{F}_{2^F}$ with $n \in [N]$ consists of $F$ bits for some $F \in \mathbb{N}^{+}$, and the $q$-th function is defined as
\begin{equation}
    \phi_q \colon \mathbb{F}^{N}_{2^F} \to \mathbb{F}_{2^B}
\end{equation}
for each $q \in [Q]$, i.e., each function maps all the $N$ input files into a stream $u_q = \phi_q(w_1, \dots, w_N) \in \mathbb{F}_{2^B}$ of $B$ bits. The main assumption is that each function $\phi_{q}$ is \emph{decomposable} and so can be written as
\begin{equation}
    \phi_q(w_1, \dots, w_N) = h_q(g_{q, 1}(w_1), \dots, g_{q, N}(w_N)), \quad \forall q \in [Q]
\end{equation}
where there is a map function $g_{q, n} \colon \mathbb{F}_{2^F} \to \mathbb{F}_{2^T}$ for each $n \in [N]$, which maps the input file $w_n$ into an intermediate value (IV) $v_{q, n} = g_{q, n}(w_n) \in \mathbb{F}_{2^T}$ of $T$ bits, and a reduce function $h_q \colon \mathbb{F}^{N}_{2^T} \to \mathbb{F}_{2^B}$, which maps all the IVs (one per input file) into the output value $u_q = h_q(v_{q, 1}, \dots, v_{q, N}) \in \mathbb{F}_{2^B}$ of $B$ bits.

In this paper, we assume to have machines that are devoted to computing map functions, and machines that are devoted to computing reduce functions. Thus, a node assigned map functions is not assigned reduce functions, and vice versa. In our setting, we consider $\Lambda$ mappers and $K = \binom{\Lambda}{\alpha}$ reducers, where --- in accordance with the combinatorial topology of choice --- there is a unique reducer connected to each subset of $\alpha$ mappers. Denoting by $\mathcal{U} \in [\Lambda]_{\alpha}$ the reducer connected to the $\alpha$ mappers in the set $\mathcal{U}$, before the computation begins, each reducer $\mathcal{U} \in [\Lambda]_{\alpha}$ is assigned a subset $\mathcal{W}_{\mathcal{U}} \subseteq [Q]$ of the output functions, where here $\mathcal{W}_{\mathcal{U}}$ contains the indices of the functions assigned to reducer $\mathcal{U}$. For simplicity, we assume in our setting a symmetric and uniform task assignment, which implies $|\mathcal{W}_{\mathcal{U}}| = Q/K = \eta_2$ for some $\eta_2 \in \mathbb{N}^{+}$ and for each $\mathcal{U} \in [\Lambda]_\alpha$, and $\mathcal{W}_{\mathcal{U}_1} \cap \mathcal{W}_{\mathcal{U}_2} = \emptyset$ for all $\mathcal{U}_1, \mathcal{U}_2 \in [\Lambda]_\alpha$ such that $\mathcal{U}_1 \neq \mathcal{U}_2$. Afterwards, the computation is performed across the set of mappers and reducers in a distributed manner following the map-shuffle-reduce paradigm.

\begin{table}[!htbp]
    \centering
    \renewcommand{\arraystretch}{1.3}
    \caption{Important parameters for the MADC system with combinatorial topology}
    \begin{tabular}{lc@{\hspace{1cm}}lc}
    \toprule
    Number of Mappers & $\Lambda$ & Number of Input Files & $N$ \\
    \midrule
    Multi-Access Degree & $\alpha$ & Communication Load & $L$ \\
    \midrule
    Number of Reducers & $K = \binom{\Lambda}{\alpha}$ & Download Cost & $J$ \\
    \midrule
    Computation Load & $r$ & Max-Link Load & $L_{\textnormal{max-link}} = \max(L, J)$ \\
    \bottomrule
    \end{tabular}
\end{table}

During the map phase, a set of files $\mathcal{M}_\lambda \subseteq \{w_1, \dots, w_N\}$ is assigned to the mapper $\lambda$ for each $\lambda \in [\Lambda]$. Each mapper $\lambda \in [\Lambda]$ computes the intermediate values $\mathcal{V}_{\lambda} = \{ v_{q, n} : q \in [Q], w_n \in \mathcal{M}_\lambda \}$ for all the $Q$ reduce functions using the files in $\mathcal{M}_\lambda$ which it has been assigned. Since reducer $\mathcal{U} \in [\Lambda]_\alpha$ is connected to the mappers in $\mathcal{U}$, it can access the intermediate values in the set $\mathcal{V}_{\mathcal{U}} = \{ v_{q, n} : q \in [Q], w_n \in \mathcal{M}_\mathcal{U} \}$, where $\mathcal{M}_{\mathcal{U}} = \bigcup_{\lambda \in \mathcal{U}} \mathcal{M}_\lambda$ is simply the union set of files assigned to and mapped by the map nodes in $\mathcal{U}$. When the communication cost between mappers and reducers is not neglected, we can define the \emph{download cost} as follows.

\begin{definition}[Download Cost]
    The \emph{download cost}, denoted by $J$, is defined as the maximal normalized number of bits transmitted across the links from the mappers to the reducers, and is given by
    \begin{equation}
        J \coloneqq \max_{\lambda \in [\Lambda]} \max_{\mathcal{U} \in [\Lambda]_{\alpha}: \lambda \in \mathcal{U}} \frac{R^{\mathcal{U}}_{\lambda}}{QNT}
    \end{equation}
    where $R_{\lambda}^{\mathcal{U}}$ denotes the number of bits that are transmitted by mapper $\lambda \in [\Lambda]$ to reducer $\mathcal{U} \in [\Lambda]_{\alpha}$ where $\lambda \in \mathcal{U}$.
\end{definition}

Assuming that each mapper computes all possible IVs from locally available files\footnote{This means that each mapper $\lambda \in [\Lambda]$ computes the intermediate value $v_{q, n}$ for each $q \in [Q]$ and for each $w_n \in \mathcal{M}_\lambda$.}, we define the \emph{computation load} as follows.

\begin{definition}[Computation Load]
    The \emph{computation load}, denoted by $r$, is defined as the total number of files mapped across the $\Lambda$ map nodes and normalized by the total number of files $N$, and it takes the form
    \begin{equation}
        r \coloneqq \frac{\sum_{\lambda \in [\Lambda]} \left|\mathcal{M}_\lambda \right|}{N}.
    \end{equation}
\end{definition}

\begin{remark}
    We wish to point out that the definition of computation load above reflects the overall mapping cost across the $\Lambda$ map nodes. As it will become clear later, our novel system model will allow for a massive computational amelioration (in the reduce phase) --- with a bounded communication overhead --- at the cost of a modest mapping cost across the $\Lambda$ mappers.
\end{remark}

During the shuffle phase, each reducer $\mathcal{U} \in [\Lambda]_\alpha$ retrieves the IVs from the mappers in $\mathcal{U}$ and creates a signal $X_{\mathcal{U}} \in \mathbb{F}_{2^{\ell_\mathcal{U}}}$ for some $\ell_\mathcal{U} \in \mathbb{N}^{+}$ and for some encoding function $\psi_{\mathcal{U}} \colon \mathbb{F}^{Q|\mathcal{M}_{\mathcal{U}}|}_{2^T} \to \mathbb{F}_{2^{\ell_{\mathcal{U}}}}$, where $X_{\mathcal{U}}$ takes the form
\begin{equation}
    X_{\mathcal{U}} = \psi_{\mathcal{U}}(\mathcal{V}_{\mathcal{U}}).
\end{equation}
Then, the signal $X_{\mathcal{U}}$ is multicasted to all other reducers via the broadcast link which connects the reducers. Since such link is assumed to be error-free, each reducer receives all the multicast transmissions without errors. The amount of information exchanged during this phase is referred to as the \emph{communication load}, which is formally defined in the following.

\begin{definition}[Communication Load]
    The \emph{communication load}, denoted by $L$, is defined as the total number of bits transmitted by the $K$ reducers over the broadcast channel during the shuffle phase, and --- after normalization by the number of bits of all intermediate values --- this load is given by
    \begin{equation}
        L \coloneqq \frac{\sum_{\mathcal{U} \in [\Lambda]_\alpha} \ell_{\mathcal{U}}}{QNT}.
    \end{equation}
\end{definition}

Recalling that reducer $\mathcal{U} \in [\Lambda]_{\alpha}$ is assigned a subset of output functions whose indices are in $\mathcal{W}_{\mathcal{U}}$, each reducer $\mathcal{U} \in [\Lambda]_{\alpha}$ wishes to recover the IVs $\{v_{q, n} : q \in \mathcal{W}_{\mathcal{U}}, n \in [N]\}$ to correctly compute $u_q$ for each $q \in \mathcal{W}_{\mathcal{U}}$. Thus, during the reduce phase, each reducer $\mathcal{U} \in [\Lambda]_\alpha$ reconstructs all the needed intermediate values for each $q \in \mathcal{W}_{\mathcal{U}}$ using the messages communicated in the shuffle phase and the intermediate values $\mathcal{V}_{\mathcal{U}}$ retrieved from the mappers in $\mathcal{U}$, i.e., each reducer $\mathcal{U} \in [\Lambda]_\alpha$ computes
\begin{equation}
    (v_{q, 1}, \dots, v_{q, N}) = \chi^q_{\mathcal{U}}(X_{\mathcal{S}} : \mathcal{S} \in [\Lambda]_{\alpha}, \mathcal{V}_{\mathcal{U}})
\end{equation}
for each $q \in \mathcal{W}_{\mathcal{U}}$ and for some decoding function $\chi^q_{\mathcal{U}} \colon \prod_{\mathcal{S} \in [\Lambda]_{\alpha}} \mathbb{F}_{2^{\ell_{\mathcal{S}}}} \times \mathbb{F}^{Q|\mathcal{M}_{\mathcal{U}}|}_{2^T} \to \mathbb{F}^{N}_{2^T}$. In the end, each reducer $\mathcal{U} \in [\Lambda]_\alpha$ computes the output function $u_q = h_q(v_{q, 1}, \dots, v_{q, N})$ for each assigned $q \in \mathcal{W}_{\mathcal{U}}$.

When the download cost is neglected, our goal is to characterize the optimal tradeoff between computation and communication $L^{\star}(r)$. This optimal tradeoff is simply defined as
\begin{equation}
    L^{\star}(r) \coloneqq \inf\{L : \text{$(r, L)$ is achievable}\}
\end{equation}
where the tuple $(r, L)$ is said to be \emph{achievable} if there exists a map-shuffle-reduce procedure such that a communication load $L$ can be guaranteed for a given computation load $r$. On the other hand, when we indeed jointly consider both the inter-reducer communication cost and the mapper-to-reducer download cost, then our aim will be to characterize the optimal max-link communication load $L^{\star}_{\textnormal{max-link}}(r)$, which is defined as
\begin{equation}
    L^{\star}_{\textnormal{max-link}}(r) \coloneqq \inf\{L_{\textnormal{max-link}} : \text{$(r, L_{\textnormal{max-link}})$ is achievable}\}
\end{equation}
where $L_{\textnormal{max-link}} \coloneqq \max(L, J)$ represents the maximum between the communication load and the download cost for a given computation load $r$. In simple words, $L_{\textnormal{max-link}}$ represents the maximal normalized number of bits flowing across any link in the considered system model. Notice that we will assume, throughout the paper, uniform computational capabilities across the mappers and uniform assignment of reduce functions across the reducers, as is commonly assumed (see for example the original work in~\cite{Li2018FundamentalTradeoffComputation}).

\begin{remark}\label{rem: Special Case}
    When $\alpha = 1$, there are $K = \Lambda$ mapper-reducer pairs. If we consider each pair to be a single computing server (which can automatically imply a zero download cost), the proposed system model trivially coincides with the original setting in~\cite{Li2018FundamentalTradeoffComputation}. Hence, since the results in this paper will hold for any $\alpha \in [\Lambda]$, the proposed model can in fact be considered as a proper extension of the original coded distributed computing model.
\end{remark}

\section{Main Results}\label{sec: Main Results}

In this section we will provide our main contributions. As we have already mentioned, we will first consider a setting where the download cost is neglected. Subsequently, we will provide some additional results for the more realistic scenario where the cost of delivering data from the mappers to the reducers is non-zero.

\subsection{Characterizing the Communication Load}\label{sec: Characterizing the Communication Load}

The first result that we provide is the achievable computation-communication tradeoff provided by the novel coded scheme that will be presented in its general form in \Cref{sec: Achievability Without Download Cost Proof}. The result is formally stated in the following theorem.

\begin{theorem}[Achievable Bound]\label{thm: Achievable Bound Without Download Cost}
    Consider the MADC setting with combinatorial topology, where there are $\Lambda$ mappers and $K = \binom{\Lambda}{\alpha}$ reducers for a fixed value of $\alpha \in [\Lambda]$. Then, the optimal communication load $L^{\star}(r)$ is upper bounded by $L_{\textnormal{UB}}(r)$ which is a piecewise linear curve with corner points
    \begin{equation}
        (r, L_{\textnormal{UB}}(r)) = \left(r, \frac{\binom{\Lambda - \alpha}{r}}{\binom{\Lambda}{r}\left( \binom{r + \alpha}{r} - 1 \right)} \right), \quad \forall r \in [\Lambda - \alpha + 1].
    \end{equation}
\end{theorem}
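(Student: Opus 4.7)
The plan is to exhibit an explicit map-shuffle procedure that achieves $L_{\textnormal{UB}}(r)$ at each integer corner $r\in\{1,\ldots,\Lambda-\alpha\}$, and then extend to arbitrary $r\in[\Lambda-\alpha+1]$ by convex combination of adjacent integer-$r$ schemes, producing the claimed piecewise-linear curve. The endpoint $r=\Lambda-\alpha+1$ is handled separately: when each file is mapped by more than $\Lambda-\alpha$ nodes, no reducer $\mathcal{U}\in[\Lambda]_\alpha$ can be disjoint from any batch of size $r$, so every reducer already has every IV locally and $L=0$ trivially matches the formula since the numerator $\binom{\Lambda-\alpha}{r}$ vanishes.

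\textbf{Placement and segments.} I partition the $N$ input files into $\binom{\Lambda}{r}$ equal batches $\{\mathcal{B}_\mathcal{P}\}_{\mathcal{P}\in[\Lambda]_r}$ and assign batch $\mathcal{B}_\mathcal{P}$ to exactly the mappers in $\mathcal{P}$; this yields computation load exactly $r$, and grants reducer $\mathcal{U}\in[\Lambda]_\alpha$ local access to the IVs of batch $\mathcal{B}_\mathcal{P}$ if and only if $\mathcal{U}\cap\mathcal{P}\neq\emptyset$. For each $\mathcal{T}\in[\Lambda]_{r+\alpha}$ and each $\alpha$-subset $\mathcal{U}\subseteq\mathcal{T}$, the \emph{segment} $S^\mathcal{T}_\mathcal{U}\coloneqq\{v_{q,n}:q\in\mathcal{W}_\mathcal{U},\,n\in\mathcal{B}_{\mathcal{T}\setminus\mathcal{U}}\}$ collects exactly what $\mathcal{U}$ is missing from batch $\mathcal{T}\setminus\mathcal{U}$, and has size $(QNT)/(K\binom{\Lambda}{r})$ bits. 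A key observation is that every other $\alpha$-subset $\mathcal{U}_0\subseteq\mathcal{T}$ with $\mathcal{U}_0\neq\mathcal{U}$ can locally reproduce $S^\mathcal{T}_\mathcal{U}$, because $|\mathcal{U}_0|=|\mathcal{U}|=\alpha$ and $\mathcal{U}_0\neq\mathcal{U}$ force $\mathcal{U}_0\cap(\mathcal{T}\setminus\mathcal{U})\neq\emptyset$, so $\mathcal{U}_0$ has access to a mapper that maps $\mathcal{B}_{\mathcal{T}\setminus\mathcal{U}}$, and the corresponding IVs for every reduce index follow since mappers compute IVs for all $q\in[Q]$.

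\textbf{Coded shuffle.} If the transmitter-locality constraint $X_\mathcal{U}=\psi_\mathcal{U}(\mathcal{V}_\mathcal{U})$ were ignored, one would XOR all $m\coloneqq\binom{r+\alpha}{\alpha}=\binom{r+\alpha}{r}$ segments associated with $\mathcal{T}$ into a single multicast, for coding gain $m$; however, every reducer is missing its own segment, so no single reducer can produce such a message. I therefore split each $S^\mathcal{T}_\mathcal{U}$ into $m-1$ equal parts $\{S^{\mathcal{T},\mathcal{U}_0}_\mathcal{U}\}$ indexed by the other $\alpha$-subsets $\mathcal{U}_0\subseteq\mathcal{T}$, and for each $\mathcal{T}\in[\Lambda]_{r+\alpha}$ and each $\alpha$-subset $\mathcal{U}_0\subseteq\mathcal{T}$ have reducer $\mathcal{U}_0$ multicast
\begin{equation*}
X^\mathcal{T}_{\mathcal{U}_0}=\bigoplus_{\mathcal{U}\subseteq\mathcal{T},\,|\mathcal{U}|=\alpha,\,\mathcal{U}\neq\mathcal{U}_0}S^{\mathcal{T},\mathcal{U}_0}_\mathcal{U}.
\end{equation*}
By the observation above, $\mathcal{U}_0$ can compute every summand, and any receiver $\mathcal{U}\subseteq\mathcal{T}$ with $|\mathcal{U}|=\alpha$ and $\mathcal{U}\neq\mathcal{U}_0$ cancels the $m-2$ summands it does not need (again by that observation), thus extracting its own part $S^{\mathcal{T},\mathcal{U}_0}_\mathcal{U}$. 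Varying $\mathcal{U}_0$ over the $m-1$ choices in $\mathcal{T}$ other than $\mathcal{U}$ delivers all parts of $S^\mathcal{T}_\mathcal{U}$ at $\mathcal{U}$, and varying $\mathcal{T}$ over the $\binom{\Lambda-\alpha}{r}$ supersets of $\mathcal{U}$ of size $r+\alpha$ covers every missing batch.

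\textbf{Load and main difficulty.} There are $m\binom{\Lambda}{r+\alpha}$ transmissions in total, each of size $(QNT)/[K\binom{\Lambda}{r}(m-1)]$, hence
\begin{equation*}
L=\frac{\binom{r+\alpha}{\alpha}\binom{\Lambda}{r+\alpha}}{(\binom{r+\alpha}{\alpha}-1)\binom{\Lambda}{\alpha}\binom{\Lambda}{r}},
\end{equation*}
and the identity $\binom{r+\alpha}{\alpha}\binom{\Lambda}{r+\alpha}=\binom{\Lambda}{\alpha}\binom{\Lambda-\alpha}{r}$ collapses this to exactly $L_{\textnormal{UB}}(r)$. The main conceptual obstacle is precisely the transmitter-locality constraint: it rules out the naive XOR with gain $m$ and forces the split-and-recombine design above, which is what turns the effective coding gain into $m-1=\binom{r+\alpha}{r}-1$ and produces the characteristic $-1$ in the denominator of the theorem. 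Beyond this insight the rest is routine — decodability is a short set-intersection check, the count reduces to one binomial identity, and interpolation to non-integer $r$ is standard memory-sharing.
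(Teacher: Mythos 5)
Your proposal is correct and is essentially the paper's own scheme: your batches $\mathcal{B}_{\mathcal{P}}$, segments $S^{\mathcal{T}}_{\mathcal{U}}$, their split into $\binom{r+\alpha}{r}-1$ parts indexed by the other $\alpha$-subsets of $\mathcal{T}$, and the per-$(\mathcal{T},\mathcal{U}_0)$ XOR are exactly the paper's $\mathcal{B}_{\mathcal{T}_1}$, $U_{\mathcal{W}_{\mathcal{R}},\mathcal{T}_1}$, $U_{\mathcal{W}_{\mathcal{R}},\mathcal{T}_1,\mathcal{T}_2}$, and the coded message reducer $\mathcal{U}_0$ forms for $\mathcal{S}=\mathcal{T}\setminus\mathcal{U}_0$, with your key set-intersection observation matching the paper's interference-cancellation argument. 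The only cosmetic difference is that you index transmissions by the $(r+\alpha)$-sets $\mathcal{T}$ and invoke the identity $\binom{r+\alpha}{\alpha}\binom{\Lambda}{r+\alpha}=\binom{\Lambda}{\alpha}\binom{\Lambda-\alpha}{r}$ to collapse the load, whereas the paper counts $\binom{\Lambda-\alpha}{r}$ messages per reducer directly.
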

\begin{proof}
    The detailed proof of the scheme is reported in \Cref{sec: Achievability Without Download Cost Proof}, whereas an illustrative example is instead described in \Cref{sec: Illustrative Example}.
\end{proof}

As we already mentioned in \Cref{rem: Special Case}, if we set $\alpha = 1$ and we consider each mapper-reducer pair as a unique computing machine, we obtain the same system model in~\cite{Li2018FundamentalTradeoffComputation}. Interestingly, we can see that, if we specialize the result in \Cref{thm: Achievable Bound Without Download Cost} to the case $\alpha = 1$, we obtain the same computation-communication tradeoff as in~\cite[Theorem 1]{Li2018FundamentalTradeoffComputation}.

Another noteworthy aspect is the following. If we fix the number of mappers $\Lambda$ and the computation load $r$, then adding more reducers by increasing\footnote{Notice that the number $K = \binom{\Lambda}{\alpha}$ of reducers is actually increased as far as $\alpha \leq \Lambda/2$. The scenario where $\alpha > \Lambda/2$ is unrealistic and is not considered here.} the multi-access degree $\alpha$ will in fact entail a smaller communication load. This (perhaps surprising) outcome is most certainly not the result of each reducer requiring fewer intermediate values during the shuffle phase. Such decrease could not have compensated for the increasing $K$. Instead, this decrease in the communication load stems from the nature of the combinatorial topology, which allows each reducer to more efficiently use its side information to cancel interference in an accelerated manner. This is achieved because these reducers are connected to the mappers in a manner that effectively aligns the interference patterns. As one can imagine, if we increase the number of reducers and we properly connect each of them to multiple mappers, the achievable scheme in \Cref{thm: Achievable Bound Without Download Cost} outperforms the coded scheme in~\cite{Li2018FundamentalTradeoffComputation}. This is formally stated in the following corollary.

\begin{corollary}\label{cor: Multi-Access Degree Improvement}
    For fixed computation load $r$, the achievable communication load in \Cref{thm: Achievable Bound Without Download Cost} decreases for increasing $\alpha$, even though $K$ --- and so the corresponding speedup factor in computing reduce functions --- increases substantially.
\end{corollary}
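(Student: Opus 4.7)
The plan is a direct monotonicity argument on the closed-form expression for $L_{\textnormal{UB}}(r)$ given in \Cref{thm: Achievable Bound Without Download Cost}. Writing $L(r;\alpha)$ in place of $L_{\textnormal{UB}}(r)$ to stress the dependence on $\alpha$, I would show that $L(r;\alpha+1) < L(r;\alpha)$ whenever $r \geq 1$ and $\alpha+1 \leq \Lambda - r$, i.e., on the interior of the common domain of validity of the two formulas.

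The key step is to form the ratio, noting that the factor $\binom{\Lambda}{r}$ does not depend on $\alpha$ and cancels:
\begin{equation*}
\frac{L(r;\alpha+1)}{L(r;\alpha)} \; = \; \underbrace{\frac{\binom{\Lambda-\alpha-1}{r}}{\binom{\Lambda-\alpha}{r}}}_{A} \; \cdot \; \underbrace{\frac{\binom{r+\alpha}{r}-1}{\binom{r+\alpha+1}{r}-1}}_{B}.
\end{equation*}
Each factor is then handled by an elementary binomial identity. From $\binom{n-1}{r} = \binom{n}{r}\cdot(n-r)/n$ one gets $A = (\Lambda-\alpha-r)/(\Lambda-\alpha) \leq 1$. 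For $B$, the identity $\binom{r+\alpha+1}{r} = \binom{r+\alpha}{r}\cdot(r+\alpha+1)/(\alpha+1)$ shows that, for $r \geq 1$, the denominator of $B$ is strictly larger than the numerator, hence $B < 1$. Combining $A \leq 1$ with $B < 1$ yields $L(r;\alpha+1) < L(r;\alpha)$, which is the first half of the corollary.

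The second half --- that $K = \binom{\Lambda}{\alpha}$, and hence the degree of parallelization in the reduce phase, actually grows with $\alpha$ --- is just monotonicity of binomial coefficients on $\alpha \leq \lfloor \Lambda/2 \rfloor$, which is precisely the regime singled out in the footnote after \Cref{thm: Achievable Bound Without Download Cost}. Since the reducers are assigned disjoint output-function sets of size $|\mathcal{W}_{\mathcal{U}}| = Q/K$, a larger $K$ directly translates into more parallelism.

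There is essentially no deep obstacle here; the proof is a two-line binomial computation. The only thing to be careful about is the boundary case $r = \Lambda - \alpha$, where $\binom{\Lambda-\alpha-1}{r} = 0$ so that $L(r;\alpha+1)=0$ and the strict inequality is preserved trivially, and the excluded case $r=0$, for which no shuffle communication is required and the statement is vacuous.
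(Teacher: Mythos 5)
Your proof is correct and follows essentially the same route as the paper's: the paper defines $s_{\alpha} \coloneqq \binom{\Lambda-\alpha}{r}/\bigl(\binom{\Lambda}{r}(\binom{r+\alpha}{r}-1)\bigr)$ and shows $s_{\alpha+1} < s_{\alpha}$ using exactly the two identities you invoke, namely $\binom{\Lambda-\alpha-1}{r} = \bigl(1 - \tfrac{r}{\Lambda-\alpha}\bigr)\binom{\Lambda-\alpha}{r}$ and $\binom{r+\alpha+1}{r} = \bigl(1 + \tfrac{r}{\alpha+1}\bigr)\binom{r+\alpha}{r}$. Your only departure is cosmetic (phrasing the comparison as a ratio $A\cdot B$ rather than rewriting $s_{\alpha+1}$ directly), and your handling of the boundary cases is a harmless addition.
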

\begin{proof}
    The proof is described in~\refappendix{cor: Multi-Access Degree Improvement Proof}.
\end{proof}

We proceed to construct an information-theoretic converse on the communication load of the MADC setting. As it will be pointed out in the general proof in \Cref{sec: Converse Without Download Cost Proof}, the construction of the converse takes inspiration from~\cite[Lemma 2]{Woolsey2021CombinatorialDesignCascaded} as well as from ideas in~\cite{Brunero2021FundamentalLimitsCombinatorial}. Essentially, the bound here manages to merge the approach in~\cite[Lemma 2]{Woolsey2021CombinatorialDesignCascaded}, where a converse bound is built using key properties of the entropy function, with the index coding techniques in~\cite{Brunero2021FundamentalLimitsCombinatorial}, where the nodes of a side information graph are iteratively selected in a proper way to systematically identify large acyclic subgraphs that are used to develop a tight converse. The result is formally stated in the following.

\begin{theorem}[Converse Bound]\label{thm: Converse Bound Without Download Cost}
    Consider the MADC setting with combinatorial topology, where there are $\Lambda$ mappers and $K = \binom{\Lambda}{\alpha}$ reducers for a fixed value of $\alpha \in [\Lambda]$. Then, the optimal communication load $L^{\star}(r)$ is lower bounded by $L_{\textnormal{LB}}(r)$ which is a piecewise linear curve with corner points
    \begin{equation}
        (r, L_{\textnormal{LB}}(r)) = \left(r, \frac{\binom{\Lambda}{r + \alpha}}{\binom{\Lambda}{\alpha}\binom{\Lambda}{r}} \right), \quad \forall r \in [\Lambda - \alpha + 1].
    \end{equation}
\end{theorem}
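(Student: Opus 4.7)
The plan is to prove this lower bound by reducing the shuffle phase of any valid scheme to a distributed index-coding instance, applying an iterative conditional-entropy (equivalently, acyclic-subgraph) inequality on the sum of broadcast rates in the spirit of~\cite[Lemma 2]{Woolsey2021CombinatorialDesignCascaded}, and then symmetrizing the bound over all permutations of the mappers via the combinatorial acyclic-subgraph selection technique of~\cite{Brunero2021FundamentalLimitsCombinatorial} adapted to the multi-access topology. First I would fix an arbitrary achievable scheme and, for each file $w_n$, denote its type by $\mathcal{T}_n\coloneqq\{\lambda:w_n\in\mathcal{M}_\lambda\}$, so that $\sum_n|\mathcal{T}_n|=rN$ and reducer $\mathcal{U}$ carries side information $\mathcal{V}_\mathcal{U}=\{v_{q,n}:q\in[Q],\,\mathcal{T}_n\cap\mathcal{U}\neq\emptyset\}$. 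Letting $D_\mathcal{U}$ collect the IVs demanded by reducer $\mathcal{U}$ that are absent from $\mathcal{V}_\mathcal{U}$, any admissible broadcast $\{X_\mathcal{U}\}_{\mathcal{U}\in[\Lambda]_\alpha}$ must enable each reducer $\mathcal{U}$ to reconstruct its own $D_\mathcal{U}$ from $X_{[\Lambda]_\alpha}$ together with $\mathcal{V}_\mathcal{U}$.

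Next, for every permutation $\sigma\in S_\Lambda$, I would induce an ordering $(\mathcal{U}_1^\sigma,\ldots,\mathcal{U}_K^\sigma)$ on the $\alpha$-subsets of $[\Lambda]$ via $\sigma$ and iterate the standard conditional-entropy chain to obtain a bound of the form
\begin{equation*}
  QNT\cdot L \;\geq\; H\bigl(X_{[\Lambda]_\alpha}\bigr) \;\geq\; \sum_{i=1}^{K} H\bigl(D_{\mathcal{U}_i^\sigma}\,\big|\,\mathcal{V}_{\mathcal{U}_i^\sigma},\,D_{\mathcal{U}_1^\sigma},\ldots,D_{\mathcal{U}_{i-1}^\sigma}\bigr),
\end{equation*}
where, since the IVs are mutually independent, each summand counts (in units of $T$) the number of ``fresh'' demanded IVs of reducer $\mathcal{U}_i^\sigma$ that are neither in its own accessed IVs nor in the demands of the earlier reducers along the order. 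Averaging over $\sigma\in S_\Lambda$ then collapses the right-hand side, by the symmetries of the combinatorial topology, to a quantity of the form $\sum_n c_{|\mathcal{T}_n|}/N$ in which the coefficient $c_s$ depends only on $s$, $\alpha$ and $\Lambda$. A careful combinatorial count --- leveraging that exactly $\binom{\Lambda-s}{\alpha}$ of the $\binom{\Lambda}{\alpha}$ reducers are disconnected from any file of type $s$ --- is expected to yield $c_s=\binom{\Lambda}{s+\alpha}/(\binom{\Lambda}{\alpha}\binom{\Lambda}{s})$ for $s\in[\Lambda-\alpha+1]$ and $c_s=0$ otherwise.

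Finally, the constraints $\sum_n 1=N$ and $\sum_n|\mathcal{T}_n|=rN$ together with the convexity of $s\mapsto c_s$ (which for this ratio of binomials is routine to verify) yield, via Jensen's inequality, the bound $L\geq c_r$ at each integer $r\in[\Lambda-\alpha+1]$; memory-sharing between consecutive integer computation loads then extends the inequality to the piecewise linear curve $L_{\textnormal{LB}}(r)$ asserted in the theorem. I expect the main technical obstacle to be precisely the symmetrized counting: designing the permutation-indexed ordering of reducers so that (i) the induced side-information subgraph is acyclic at every step, and (ii) the averaged number of genuinely new IVs at step $i$ collapses, via a nontrivial binomial identity, to the closed form above. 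Once that count is executed cleanly, verifying convexity and matching the corner points reduce to standard manipulations of binomial coefficients.
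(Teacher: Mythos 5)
Your plan follows the same route as the paper's proof: an iterative conditional-entropy bound for each permutation (the paper's Lemma~1), symmetrization over $S_\Lambda$, a count in terms of the numbers $\tilde a^j_{\mathcal M}$ of files mapped by exactly $j$ mappers, and finally Jensen's inequality with convexity of $j\mapsto\binom{\Lambda}{\alpha+j}/\binom{\Lambda}{j}$ plus the lower convex envelope. Your target coefficient $c_s=\binom{\Lambda}{s+\alpha}/(\binom{\Lambda}{\alpha}\binom{\Lambda}{s})$ is exactly what the paper obtains.

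However, the displayed chain inequality is wrong as written, and the error is not cosmetic. When you telescope $H(X_{[\Lambda]_\alpha})$ along the order $\mathcal U_1^\sigma,\dots,\mathcal U_K^\sigma$, the side information $\mathcal V_{\mathcal U_j^\sigma}$ of every previously processed reducer accumulates in the conditioning and cannot be dropped; the correct summand is $H\bigl(D_{\mathcal U_i^\sigma}\mid \mathcal V_{\mathcal U_1^\sigma},\dots,\mathcal V_{\mathcal U_i^\sigma},D_{\mathcal U_1^\sigma},\dots,D_{\mathcal U_{i-1}^\sigma}\bigr)$. With your summand, since the sets $\mathcal W_{\mathcal U}$ are disjoint and the IVs independent, each term equals $\eta_2 T\,\#\{n:\mathcal T_n\cap\mathcal U_i^\sigma=\emptyset\}$, and summing over $i$ would give $L\ge\sum_j\binom{\Lambda-j}{\alpha}\tilde a^j_{\mathcal M}/\bigl(\binom{\Lambda}{\alpha}N\bigr)$ --- the \emph{uncoded} load, which the achievable scheme of Theorem~1 beats by a factor of $\binom{r+\alpha}{r}-1$; so that inequality cannot be a consequence of $H(X_{[\Lambda]_\alpha})\ge\cdots$. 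Once the conditioning is restored, the fresh IVs at step $i$ are those of files disconnected from the whole union $\mathcal U_1^\sigma\cup\cdots\cup\mathcal U_i^\sigma$, and this is where the choice of ordering matters: the paper permutes the \emph{mappers} and processes, at step $i\in[\alpha:\Lambda]$, the entire group of reducers $\mathcal U\subseteq\{c_1,\dots,c_i\}$ with $c_i\in\mathcal U$, so that the accumulated union is always the prefix $\{c_1,\dots,c_i\}$ and the count of fresh files is $\sum_{\mathcal T\subseteq[\Lambda]\setminus\{c_1,\dots,c_i\}}\tilde a^{\mathcal T}$; averaging over permutations then yields precisely $\binom{\Lambda}{\alpha+j}/\bigl(\binom{\Lambda}{\alpha}\binom{\Lambda}{j}\bigr)$. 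A last small point: the paper also needs the coefficient to be \emph{decreasing} in $j$, not only convex, because it works with the constraint $\sum_j j\tilde a^j_{\mathcal M}\le rN$; if you insist on equality in the computation-load constraint this is not needed, but the monotonicity is what makes the bound valid for every assignment with load at most $r$.
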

\begin{proof}
    The proof is described in \Cref{sec: Converse Without Download Cost Proof}.
\end{proof}

Finally, from the results in \Cref{thm: Achievable Bound Without Download Cost} and \Cref{thm: Converse Bound Without Download Cost}, we can provide an order optimality guarantee for the MADC model. Indeed, comparing the achievable performance and the converse bound, we conclude that the two are within a constant multiplicative gap. We see this in the following theorem\footnote{Notice that the order optimality result in \Cref{thm: Order Optimality Without Download Cost} excludes the value $\alpha = 1$. Indeed, it can be verified that for such case the achievable performance in \Cref{thm: Achievable Bound Without Download Cost} and the converse in \Cref{thm: Converse Bound Without Download Cost} are within a factor of at most $2$. However, we already know that the coded scheme in~\cite{Li2018FundamentalTradeoffComputation} is exactly optimal when $\alpha = 1$. Hence, such value is neglected when comparing the aforementioned results.}.

\begin{theorem}[Order Optimality]\label{thm: Order Optimality Without Download Cost}
    For the MADC system with combinatorial topology, $\Lambda$ mappers and $K = \binom{\Lambda}{\alpha}$ reducers for a fixed value of $\alpha \in [2:\Lambda]$, the achievable performance in \Cref{thm: Achievable Bound Without Download Cost} is within a factor of at most $1.5$ from the optimal.
\end{theorem}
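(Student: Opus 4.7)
The plan is to directly compare the achievable upper bound from \Cref{thm: Achievable Bound Without Download Cost} against the converse lower bound from \Cref{thm: Converse Bound Without Download Cost} at each of their corner points, then simplify their ratio using a single binomial identity, and finally extend the bound from the corner points to the entire piecewise linear curves.

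First, I would form the pointwise ratio at each integer corner point $r \in [\Lambda - \alpha + 1]$. Plugging in both expressions and cancelling the common factor $\binom{\Lambda}{r}$, the ratio becomes
\begin{equation}
    \frac{L_{\textnormal{UB}}(r)}{L_{\textnormal{LB}}(r)} = \frac{\binom{\Lambda - \alpha}{r} \binom{\Lambda}{\alpha}}{\binom{\Lambda}{r + \alpha}\bigl( \binom{r + \alpha}{r} - 1 \bigr)}.
\end{equation}
The key simplification exploits the standard identity $\binom{\Lambda}{\alpha}\binom{\Lambda - \alpha}{r} = \binom{\Lambda}{r+\alpha}\binom{r+\alpha}{r}$, which follows from counting in two ways the number of ordered pairs $(\mathcal{A}, \mathcal{B})$ with $\mathcal{A} \subseteq [\Lambda]$, $\mathcal{B} \subseteq [\Lambda]\setminus \mathcal{A}$, $|\mathcal{A}| = \alpha$ and $|\mathcal{B}| = r$. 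Substituting this into the numerator and cancelling $\binom{\Lambda}{r+\alpha}$ collapses the ratio to the remarkably clean form
\begin{equation}
    \frac{L_{\textnormal{UB}}(r)}{L_{\textnormal{LB}}(r)} = \frac{\binom{r + \alpha}{r}}{\binom{r + \alpha}{r} - 1}.
\end{equation}

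Since the right-hand side is a decreasing function of $\binom{r+\alpha}{r}$, its supremum over the allowed range is attained where $\binom{r+\alpha}{r}$ is smallest. For $\alpha \geq 2$ and $r \geq 1$, the minimum value of $\binom{r+\alpha}{r}$ is $\binom{3}{1} = 3$, achieved at the corner $(r, \alpha) = (1, 2)$, which gives a maximum ratio of $3/2$. This is precisely the point where the restriction $\alpha \in [2:\Lambda]$ (and the exclusion of $\alpha = 1$, already discussed in the footnote accompanying the theorem) matters: for $\alpha = 1$ the same formula would yield $(r+1)/r$, which reaches $2$ at $r = 1$ and would not meet the claimed $1.5$ bound.

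Finally, I would extend the bound to non-integer $r$ by a monotonicity argument. On every interval between two consecutive corner points, both $L_{\textnormal{UB}}(r)$ and $L_{\textnormal{LB}}(r)$ are linear and strictly positive, so their ratio $r \mapsto (a_1 r + b_1)/(a_2 r + b_2)$ has a derivative of constant sign and is therefore monotonic on that interval. Consequently, its maximum on each piece is attained at one of the endpoints, where the bound $1.5$ has already been established. The main obstacle, if any, is not conceptual but bookkeeping: one must verify that the binomial identity used is valid across the full parameter range (including boundary cases like $r = \Lambda - \alpha$, where $\binom{\Lambda - \alpha}{r}$ and $\binom{\Lambda}{r+\alpha}$ both remain well-defined and strictly positive), and that the monotonicity argument applies uniformly on each linear segment. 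Both checks are routine.
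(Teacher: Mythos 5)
Your proposal is correct and follows essentially the same route as the paper: both reduce the ratio $L_{\textnormal{UB}}(r)/L_{\textnormal{LB}}(r)$ to $\binom{r+\alpha}{r}/\bigl(\binom{r+\alpha}{r}-1\bigr)$ and bound it by $3/2$ using $\alpha \geq 2$, the only cosmetic differences being that you make the Vandermonde-type product identity explicit and minimize $\binom{r+\alpha}{r}$ directly, whereas the paper first shows the ratio is decreasing in $r$ and then evaluates it at $r=1$.
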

\begin{proof}
    The proof is described in~\refappendix{sec: Order Optimality Without Download Cost Proof}.
\end{proof}

\begin{figure}[!htb]
  \centering
  \def\L{10}
  \def\a{2}
  \pgfmathsetmacro\rmax{\L - \a + 1}
  \begin{tikzpicture}[ 
    declare function = {
      binom(\n,\k) = (\n >= \k)*(\n!)/(\k!*(\n-\k)!);
      CDC(\K, \r) = (1 - \r/\K)/\r;
      UB(\L, \a, \r) = (binom(\L - \a, \r))/(binom(\L, \r)*(binom(\r + \a, \r) - 1));
      LB(\L, \a, \r) = (binom(\L, \r + \a)/(binom(\L, \r)*binom(\L, \a)));
    }]
    \begin{axis}[xlabel = {Computation Load $r$}, ylabel = {Communication Load $L(r)$}, grid = major, enlargelimits = {value = 0.2, upper}, legend cell align = {left}, legend style = {font = \small}, xtick distance = 1, scale = 1.25]
      \addplot+[variable = r, samples at = {1, ..., \L}, mark = *, mark size = 1.5, mark options = {solid}, thick, black, dashed]{CDC(\L, r)};
      \addlegendentry{CDC with $\Lambda = 10$ and $\alpha = 1$};
      \addplot+[variable = r, samples at = {1, ..., \rmax}, mark = diamond*, mark size = 2.5, mark options = {solid}, thick, blue, dash dot]{UB(\L, \a, r)};
      \addlegendentry{Achievable MADC with $\Lambda = 10$ and $\alpha = 2$};
      \addplot+[variable = r, samples at = {1, ..., \rmax}, mark = square*, mark size = 1.5, mark options = {solid}, thick, red, dash dot]{LB(\L, \a, r)};
      \addlegendentry{Converse MADC with $\Lambda = 10$ and $\alpha = 2$};
    \end{axis}
  \end{tikzpicture}
  \caption{Comparison between original CDC, where there are $\Lambda = 10$ pairs of mappers and reducers, and MADC with combinatorial topology, $\Lambda = 10$ mappers and $K = 45$ reducers, where each of them is uniquely associated to $\alpha = 2$ mappers.}
  \label{fig: Comparison Between CDC and MADC}
\end{figure}
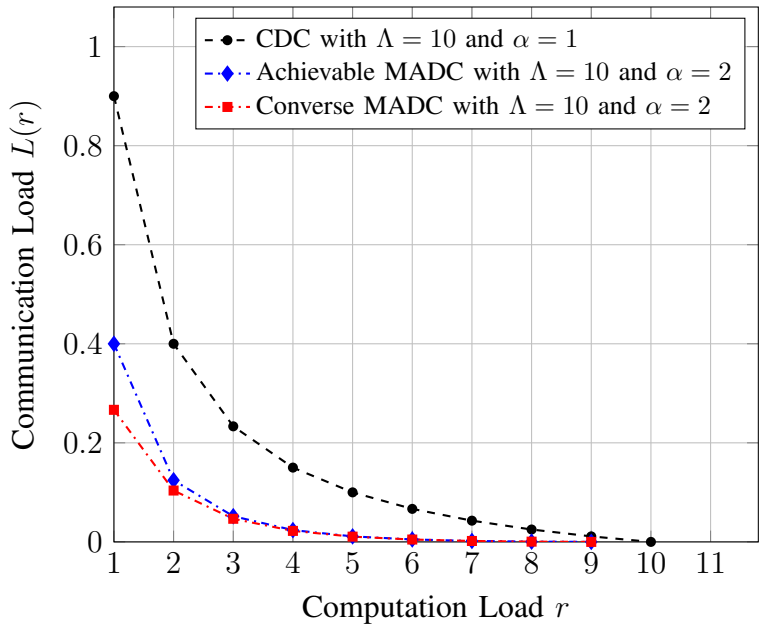

In \Cref{fig: Comparison Between CDC and MADC} we can see a comparison between the original CDC framework and the proposed MADC model. More specifically, for the first setting we consider $\Lambda = 10$ pairs of mappers and reducers, where each pair $\lambda \in [10]$ can be considered as a unique computing server having its own subset $\mathcal{M}_\lambda$ of assigned files. For the second setting we consider $\Lambda = 10$ mappers and $K = \binom{10}{2} = 45$ reducers, where there is a reducer connected to any $\alpha = 2$ mappers. According to \Cref{cor: Multi-Access Degree Improvement}, the achievable load in \Cref{thm: Achievable Bound Without Download Cost} decreases for increasing $\alpha$ and fixed computation load, as indicated by the diamond blue curve in \Cref{fig: Comparison Between CDC and MADC} which is well below the dot black counterpart corresponding to the original achievable scheme of coded distributed computing. Notice that the comparison in \Cref{fig: Comparison Between CDC and MADC} between the CDC setting with $\alpha = 1$ and the MADC setting with $\alpha > 1$ is fair for what concerns the computation-communication tradeoff: indeed, not only the computation load $r$ remains the same as far as the number of mappers $\Lambda$ stays the same, but also the number of reducers that need to communicate with each other is much larger than $\Lambda$ when $\alpha > 1$.

\begin{remark}
    We point out that comparing a setting where $\alpha = 1$ with a setting where $\alpha > 1$ offers noteworthy insights. Indeed, even though one could expect the communication load to be reduced when $\alpha > 1$ --- as in such case each reducer accesses more than one mapper and consequently misses less intermediate values --- it is also true that the number of reducers itself increases as $\alpha$ increases. Consequently, there is an undeniable tension between the higher multi-access degree $\alpha > 1$ for each reducer, which implies less data needed by each reducer, but also implies a larger number of reducers in the system. For these reasons, it is interesting to notice how the network topology plays a fundamental role in resolving such tension by appropriately shaping the interference patterns. As a consequence, the communication load ultimately decreases as the number of reducers $K = \binom{\Lambda}{\alpha}$ increases as far as each of them is \emph{properly} connected to $\alpha$ mappers.
\end{remark}

\begin{figure}[!htb]
  \centering
  \begin{tikzpicture}[ 
    declare function = {
      binom(\n,\k) = (\n >= \k)*(\n!)/(\k!*(\n-\k)!);
      G(\a, \r) = (binom(\r + \a, \r) - 1);
    }]
    \begin{axis}[xlabel = {Computation Load $r$}, ylabel = {Coding Gain}, grid = major, enlargelimits = {value = 0.2, upper}, legend cell align = {left}, legend style = {font = \small}, legend pos = {north west}, xtick distance = 1, scale = 1.25]
      \addplot+[variable = r, samples at = {1, ..., 6}, mark = *, mark size = 1.5, mark options = {solid}, thick, black, dashed]{G(1, r)};
      \addlegendentry{Gain with $\alpha = 1$};
      \addplot+[variable = r, samples at = {1, ..., 6}, mark = diamond*, mark size = 2.5, mark options = {solid}, thick, teal, dashed]{G(2, r)};
      \addlegendentry{Gain with $\alpha = 2$};
      \addplot+[variable = r, samples at = {1, ..., 6}, mark = square*, mark size = 1.5, mark options = {solid}, thick, purple, dashed]{G(3, r)};
      \addlegendentry{Gain with $\alpha = 3$};
    \end{axis}
  \end{tikzpicture}
  \caption{Comparison between the coding gain for different values of $\alpha$ as a function of the computation load $r$. We recall that $\alpha = 1$ corresponds to the original CDC framework.}
  \label{fig: Comparison Between CDC and MADC Gains}
\end{figure}
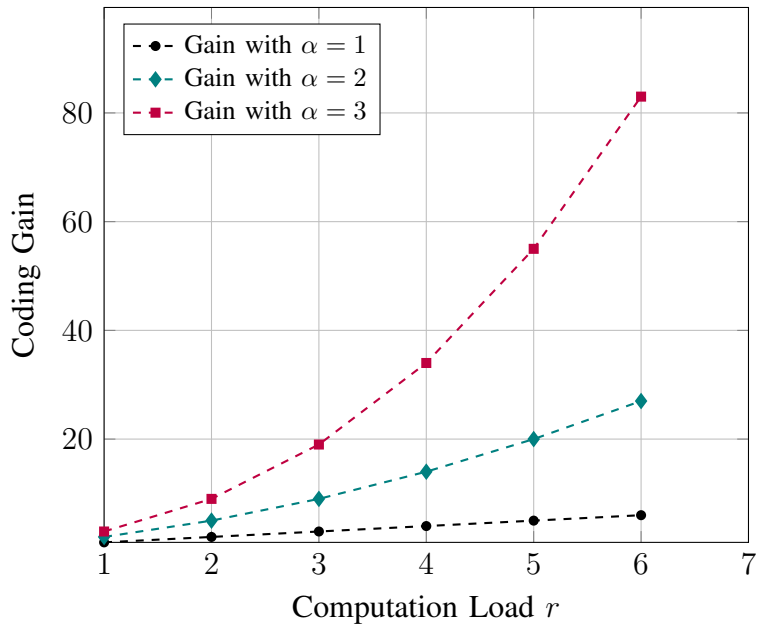

A further comparison is provided in \Cref{fig: Comparison Between CDC and MADC Gains} which focuses on the coding gains. As we can see, the gains brought about by the multi-access setting are impressive even when the computation load is small, which is the regime of interest in practical settings.

\subsection{Characterizing the Max-Link Load}\label{sec: Characterizing the Max-Link Load}

We now consider a distributed computing scenario where the download cost is not negligible. The following describes the achievable max-link communication load that captures both communication and download costs. 

\begin{theorem}[Achievable Bound]\label{thm: Achievable Bound With Download Cost}
    Consider the MADC setting with combinatorial topology, where there are $\Lambda$ mappers and $K = \binom{\Lambda}{\alpha}$ reducers for a fixed value of $\alpha \in [\Lambda]$. Then, the optimal max-link communication load $L^{\star}_{\textnormal{max-link}}(r)$ is upper bounded by $L_{\textnormal{max-link}, \textnormal{UB}}(r)$ which is given by
    \begin{equation}
        L_{\textnormal{max-link}, \textnormal{UB}}(r) = \max\left(\sum_{j \in [\Lambda]} \frac{\binom{\Lambda - \alpha}{j}}{\binom{\Lambda}{j}\left(\binom{j + \alpha}{j} - 1\right)} \frac{\tilde{a}^{j}_{\star}}{N}, \sum_{j \in [\Lambda]} \frac{\binom{\Lambda}{\alpha} - \binom{\Lambda - j}{\alpha}}{\alpha \binom{\Lambda}{\alpha}} \frac{\tilde{a}^{j}_{\star}}{N}  \right)
    \end{equation}
    where the vector $\tilde{\bm{a}}_{\star} = (\tilde{a}^{1}_{\star}, \dots, \tilde{a}^{\Lambda}_{\star})$ is the optimal solution to the linear program
    \begin{subequations}\label{eqn: Linear Program}
        \begin{alignat}{2}
                & \min_{\tilde{\bm{a}}_{\mathcal{M}}} & \quad & \frac{1}{2} \sum_{j \in [\Lambda]} \left(\frac{\binom{\Lambda}{\alpha + j}}{\binom{\Lambda}{\alpha}\binom{\Lambda}{j}} + \frac{\binom{\Lambda}{\alpha} - \binom{\Lambda - j}{\alpha}}{\alpha \binom{\Lambda}{\alpha}} \right) \frac{\tilde{a}^{j}_{\mathcal{M}}}{N} \\
                & \textnormal{subject to}  &  & \tilde{a}^{j}_{\mathcal{M}} \geq 0, \quad \forall j \in [\Lambda] \\
                & & & \sum_{j \in [\Lambda]} \frac{\tilde{a}^{j}_{\mathcal{M}}}{N} = 1 \\
                & & & \sum_{j \in [\Lambda]} j\frac{\tilde{a}^{j}_{\mathcal{M}}}{N} \leq r \label{eqn: Computation Load Constraint}
        \end{alignat}
    \end{subequations}
    and where $\tilde{\bm{a}}_{\mathcal{M}} = (\tilde{a}^{1}_{\mathcal{M}}, \dots, \tilde{a}^{\Lambda}_{\mathcal{M}})$ is the control variable.
\end{theorem}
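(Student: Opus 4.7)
The plan is to prove Theorem~\ref{thm: Achievable Bound With Download Cost} by a memory-sharing construction over file groups, layering the shuffle-phase scheme of Theorem~\ref{thm: Achievable Bound Without Download Cost} on top of a carefully balanced mapper-to-reducer delivery, and then letting the memory-sharing weights be selected by the linear program \eqref{eqn: Linear Program}. First I would partition the $N$ input files into $\Lambda$ disjoint groups, where group $j \in [\Lambda]$ contains $\tilde{a}^{j}_{\mathcal{M}}$ files that are mapped by exactly $j$ of the $\Lambda$ mappers. Within group $j$ I would use a fully symmetric combinatorial placement, so that each of the $\binom{\Lambda}{j}$ possible $j$-subsets of $[\Lambda]$ is the mapping set of exactly $\tilde{a}^{j}_{\mathcal{M}}/\binom{\Lambda}{j}$ files. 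The normalization $\sum_{j}\tilde{a}^{j}_{\mathcal{M}}=N$ forces a legitimate partition, and the total mapping count $\sum_{j} j\,\tilde{a}^{j}_{\mathcal{M}}$ yields the computation load constraint $r=\sum_{j} j\,\tilde{a}^{j}_{\mathcal{M}}/N$ that appears as \eqref{eqn: Computation Load Constraint}.

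For the shuffle phase I would run the coded scheme of Theorem~\ref{thm: Achievable Bound Without Download Cost} independently on each group $j$, with mapping parameter equal to $j$. Since within group $j$ the placement exactly matches the hypothesis of Theorem~\ref{thm: Achievable Bound Without Download Cost}, its per-group contribution to the inter-reducer load is $\binom{\Lambda-\alpha}{j}/\bigl[\binom{\Lambda}{j}\bigl(\binom{j+\alpha}{j}-1\bigr)\bigr]$, and summing with weights $\tilde{a}^{j}_{\mathcal{M}}/N$ produces exactly the first argument of the outer max. For the mapper-to-reducer phase I would, for each group $j$ and each file with mapping set $\mathcal{J}$, route the demand of every accessible reducer $\mathcal{U}$ (those with $\mathcal{U} \cap \mathcal{J} \neq \emptyset$) in a symmetric way across the mappers in $\mathcal{U} \cap \mathcal{J}$. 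Using the symmetry of the combinatorial placement and a Vandermonde-style identity, the average (and, by symmetry, the maximum) load on any mapper-to-reducer link from group-$j$ traffic collapses to the stated expression $(\binom{\Lambda}{\alpha}-\binom{\Lambda-j}{\alpha})/(\alpha\binom{\Lambda}{\alpha})\cdot \tilde{a}^{j}_{\mathcal{M}}/N$, and summing over $j$ reproduces the second argument of the outer max.

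The last step is to choose the weights $\tilde{\bm{a}}_{\mathcal{M}}$. Directly minimizing $\max(L,J)$ is not a linear program, so I would minimize instead the convex combination $(L_{\textnormal{LB}}+J)/2$, where $L_{\textnormal{LB}}$ is the converse bound of Theorem~\ref{thm: Converse Bound Without Download Cost} used as a proxy for $L_{\textnormal{UB}}$. This is the linear program \eqref{eqn: Linear Program}, whose minimizer $\tilde{\bm{a}}_{\star}$ I would then plug back into the achievable expressions for $L$ and $J$; the reported upper bound is simply $\max(L_{\textnormal{UB}}(\tilde{\bm{a}}_{\star}),J(\tilde{\bm{a}}_{\star}))$. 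Using $L_{\textnormal{LB}}$ in place of $L_{\textnormal{UB}}$ in the objective is justified because the two differ only by the constant factor guaranteed by Theorem~\ref{thm: Order Optimality Without Download Cost}, so the resulting scheme remains order-optimal in the combined max-link sense.

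The main obstacle I expect is the second step, namely designing and analyzing the mapper-to-reducer routing so that the per-link load precisely matches $(\binom{\Lambda}{\alpha}-\binom{\Lambda-j}{\alpha})/(\alpha\binom{\Lambda}{\alpha})$: this factor is not the naive per-link average but emerges from summing $1/|\mathcal{U}\cap\mathcal{J}|$ over all $\mathcal{J}$ of size $j$ containing a fixed mapper $\lambda$ and then applying the identity $\binom{\Lambda-\alpha}{j}/\binom{\Lambda}{j}=\binom{\Lambda-j}{\alpha}/\binom{\Lambda}{\alpha}$. A subsidiary difficulty is the divisibility needed so that $\tilde{a}^{j}_{\mathcal{M}}/\binom{\Lambda}{j}$ is an integer for every $j$; this I would handle by first proving the result for $N$ large enough to make every required fraction integral, and then removing the assumption via the standard file-splitting/sub-packetization argument on the scheme of Theorem~\ref{thm: Achievable Bound Without Download Cost}.
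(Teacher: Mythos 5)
Your proposal matches the paper's proof essentially step for step: the same partition of files into groups by mapping multiplicity with symmetric combinatorial placement, the same layering of the Theorem~\ref{thm: Achievable Bound Without Download Cost} scheme per group for the shuffle load, the same balanced splitting of each intermediate value's delivery across the mappers in $\mathcal{U} \cap \mathcal{J}$ (with the Vandermonde-type identity of the paper's \Cref{lem: Download Cost Lemma} collapsing the per-link count), and the same use of the linear program to select the weights. The construction and the key identity you flag as the main obstacle are exactly those carried out in \Cref{sec: Achievability With Download Cost Proof}, so the approach is correct and not materially different.
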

\begin{proof}
    The detailed proof of the scheme is reported in \Cref{sec: Achievability With Download Cost Proof}.
\end{proof}

We proceed by proposing an information-theoretic converse on the max-link communication load. The result is presented in the following theorem.

\begin{theorem}[Converse Bound]\label{thm: Converse Bound With Download Cost}
    Consider the MADC setting with combinatorial topology, where there are $\Lambda$ mappers and $K = \binom{\Lambda}{\alpha}$ reducers for a fixed value of $\alpha \in [\Lambda]$. Then, the optimal max-link communication load $L^{\star}_{\textnormal{max-link}}(r)$ is lower bounded by $L_{\textnormal{max-link}, \textnormal{LB}}(r)$ which is given by
    \begin{equation}
        L_{\textnormal{max-link}, \textnormal{LB}}(r) = \frac{1}{2} \sum_{j \in [\Lambda]} \left(\frac{\binom{\Lambda}{\alpha + j}}{\binom{\Lambda}{\alpha}\binom{\Lambda}{j}} + \frac{\binom{\Lambda}{\alpha} - \binom{\Lambda - j}{\alpha}}{\alpha \binom{\Lambda}{\alpha}} \right) \frac{\tilde{a}^{j}_{\star}}{N}
    \end{equation}
    where the vector $\tilde{\bm{a}}_{\star} = (\tilde{a}^{1}_{\star}, \dots, \tilde{a}^{\Lambda}_{\star})$ is the optimal solution to the linear program in~\eqref{eqn: Linear Program}.
\end{theorem}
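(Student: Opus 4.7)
The plan is to combine the trivial averaging inequality $L_{\textnormal{max-link}}=\max(L,J)\geq (L+J)/2$ with two universal lower bounds---one on the inter-reducer broadcast load $L$ and one on the download cost $J$---and then to recognize the resulting inequality as the value of the linear program in~\eqref{eqn: Linear Program}. As a preliminary reduction I would symmetrize: averaging any achievable scheme over the natural action of $S_{\Lambda}$ on the mappers (which, through the combinatorial topology, simultaneously permutes the reducers) produces a symmetric scheme whose loads are no larger, by convexity of the normalized loads in the file-mapping distribution. Such a symmetric scheme is parametrized by the vector $\tilde{\bm{a}}$, where $\tilde{a}^{j}$ counts the files mapped to exactly $j$ mappers (spread evenly over the $\binom{\Lambda}{j}$ subsets of size $j$), and the constraints $\tilde{a}^{j}\geq 0$, $\sum_{j}\tilde{a}^{j}/N=1$ and $\sum_{j}j\,\tilde{a}^{j}/N\leq r$ precisely recover the feasibility region of the LP.

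I would then establish the two individual bounds. For the broadcast load, the converse of \Cref{thm: Converse Bound Without Download Cost} handles the uniform case via an acyclic-subgraph / index-coding argument; because intermediate values of files belonging to different computation classes $j$ are mutually independent, running the same entropy decomposition class-by-class yields $L\geq\sum_{j\in[\Lambda]}\frac{\binom{\Lambda}{\alpha+j}}{\binom{\Lambda}{\alpha}\binom{\Lambda}{j}}\frac{\tilde{a}^{j}}{N}$ for every symmetric scheme. For the download cost, the key observation is that every bit of information about a file $w_{n}$ must originate at its $\mathcal{J}_{n}$ mappers and can leave that mapper set only through the download links of those mappers, because the shared broadcast channel lies entirely among the reducers and only re-encodes what they have already received. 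Placing a cut around $\mathcal{J}_{n}$ and accounting separately, for each of the $\binom{\Lambda}{\alpha}-\binom{\Lambda-j}{\alpha}$ accessing reducers, for the IVs the reducer needs for its own functions together with the IVs it must store in order to enable the coded broadcasts that serve the $\binom{\Lambda-j}{\alpha}$ non-accessing reducers, yields the bound $J\geq\sum_{j\in[\Lambda]}\frac{\binom{\Lambda}{\alpha}-\binom{\Lambda-j}{\alpha}}{\alpha\binom{\Lambda}{\alpha}}\frac{\tilde{a}^{j}}{N}$ after normalizing by the $\alpha\binom{\Lambda}{\alpha}$ download links and invoking the max-above-average inequality under symmetry.

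Combining the two individual inequalities gives, for the scheme's own distribution $\tilde{\bm{a}}$,
\begin{equation*}
L^{\star}_{\textnormal{max-link}}\;\geq\;\frac{L+J}{2}\;\geq\;\frac{1}{2}\sum_{j\in[\Lambda]}\!\left(\frac{\binom{\Lambda}{\alpha+j}}{\binom{\Lambda}{\alpha}\binom{\Lambda}{j}}+\frac{\binom{\Lambda}{\alpha}-\binom{\Lambda-j}{\alpha}}{\alpha\binom{\Lambda}{\alpha}}\right)\frac{\tilde{a}^{j}}{N}.
\end{equation*}
The right-hand side is precisely the LP objective evaluated at the feasible point $\tilde{\bm{a}}$, so minimizing over all feasible distributions returns $L^{\star}_{\textnormal{max-link}}\geq L_{\textnormal{max-link},\textnormal{LB}}(r)$, as claimed.

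The hardest step is the download-side bound, because a scheme can in principle shift needed IVs from downloads to broadcasts, reducing $J$ at the cost of inflating $L$, so one cannot hope to bound $J$ by a naive counting of IVs that must be downloaded. The resolution is the cut-set picture above: every bit of broadcast traffic must have first been downloaded by some accessing reducer, so the entropy charged to the download side must include the side information required to render the coded broadcast decodable by the non-accessing reducers. Ensuring that this accounting does not double-count across files, while preserving the symmetry that allows the max-per-link to be replaced by the average, is the main bookkeeping challenge and is what produces the specific coefficient in $c_{2}(j)$ rather than a weaker one with an extra factor of $K$ in the denominator.
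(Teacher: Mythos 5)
Your overall skeleton is exactly the paper's: bound $\max(L,J)\geq (L+J)/2$, reuse the per-assignment broadcast-load bound $L_{\mathcal{M}} \geq \sum_{j}\frac{\binom{\Lambda}{\alpha+j}}{\binom{\Lambda}{\alpha}\binom{\Lambda}{j}}\frac{\tilde{a}^{j}_{\mathcal{M}}}{N}$ already established in the proof of \Cref{thm: Converse Bound Without Download Cost}, derive a matching lower bound on $J$, and observe that minimizing the resulting expression over the feasible $\tilde{\bm{a}}_{\mathcal{M}}$ is precisely the linear program in~\eqref{eqn: Linear Program}. Two remarks, one cosmetic and one substantive. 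The cosmetic one: your symmetrization preamble is unnecessary. The paper never symmetrizes the scheme; it proves the bound for an \emph{arbitrary} file assignment $\mathcal{M}$ in terms of the counts $\tilde{a}^{j}_{\mathcal{M}}$ and then takes the infimum over all assignments satisfying $\sum_{\lambda}|\mathcal{M}_{\lambda}|\leq rN$, which directly yields the LP without ever having to argue that an averaged scheme is again a valid scheme.

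The substantive issue is your justification of the download-side bound. A genuine cut around the mappers $\mathcal{J}_{n}$ of a file $w_{n}$ only forces each of its $Q$ intermediate values to cross the cut \emph{once in total} (roughly $QT$ bits), because a single downloaded bit can subsequently be broadcast to, and reused by, many reducers; it cannot by itself produce the multiplicity $\binom{\Lambda}{\alpha}-\binom{\Lambda-j}{\alpha}$, i.e., the claim that \emph{every} accessing reducer separately downloads all $Q$ IVs of $w_{n}$. Your appeal to ``side information required to render the coded broadcast decodable'' does not close this, since a given accessing reducer need not participate in any broadcast involving $w_{n}$. The paper obtains the multiplicity not from a cut-set argument but from the system model itself: the encoder $\psi_{\mathcal{U}}$ and decoder $\chi^{q}_{\mathcal{U}}$ both take the full set $\mathcal{V}_{\mathcal{U}}$ as input, so each reducer $\mathcal{U}$ must be delivered all of $C_{\mathcal{U}}$, giving $R^{\mathcal{U}}\geq H(C_{\mathcal{U}})=QT\sum_{\mathcal{T}:\mathcal{T}\cap\mathcal{U}\neq\emptyset}\tilde{a}^{\mathcal{T}}$; the coefficient $\bigl(\binom{\Lambda}{\alpha}-\binom{\Lambda-j}{\alpha}\bigr)/\bigl(\alpha\binom{\Lambda}{\alpha}\bigr)$ then falls out of replacing the max over the $\alpha\binom{\Lambda}{\alpha}$ links by their average and counting, for each $\mathcal{T}$ with $|\mathcal{T}|=j$, the number of reducers intersecting $\mathcal{T}$. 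If you replace your cut-set story with this model-based step (which is the one assumption under which the stated coefficient is actually provable), the rest of your argument goes through and coincides with the paper's proof.
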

\begin{proof}
    The proof is described in \Cref{sec: Converse With Download Cost Proof}.
\end{proof}

Finally, we can compare the results in \Cref{thm: Achievable Bound With Download Cost} and \Cref{thm: Converse Bound With Download Cost} to establish the gap to optimality of the achievable performance in \Cref{thm: Achievable Bound With Download Cost}. Notice that now we do not exclude the value $\alpha = 1$ for such comparison, since for such case there is no previously known optimality result to the best of our knowledge.

\begin{theorem}[Order Optimality]\label{thm: Order Optimality With Download Cost}
    For the MADC system with combinatorial topology, $\Lambda$ mappers and $K = \binom{\Lambda}{\alpha}$ reducers for a fixed value of $\alpha \in [\Lambda]$, the achievable performance in \Cref{thm: Achievable Bound With Download Cost} is within a factor of at most $4$ from the optimal.
\end{theorem}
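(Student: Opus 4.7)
The plan is to compare the achievable bound with the converse bound termwise, exploiting the crucial fact that both expressions are evaluated at the \emph{same} optimizer $\tilde{\bm{a}}_\star$ of the linear program in~\eqref{eqn: Linear Program}. For brevity, I would write the three sums that appear as
\begin{align*}
  A &\coloneqq \sum_{j \in [\Lambda]} \frac{\binom{\Lambda - \alpha}{j}}{\binom{\Lambda}{j}\left(\binom{j + \alpha}{j} - 1\right)} \frac{\tilde{a}^{j}_{\star}}{N}, \quad
  B \coloneqq \sum_{j \in [\Lambda]} \frac{\binom{\Lambda}{\alpha} - \binom{\Lambda - j}{\alpha}}{\alpha \binom{\Lambda}{\alpha}} \frac{\tilde{a}^{j}_{\star}}{N}, \\
  C &\coloneqq \sum_{j \in [\Lambda]} \frac{\binom{\Lambda}{\alpha + j}}{\binom{\Lambda}{\alpha}\binom{\Lambda}{j}} \frac{\tilde{a}^{j}_{\star}}{N},
\end{align*}
so that \Cref{thm: Achievable Bound With Download Cost} and \Cref{thm: Converse Bound With Download Cost} give $L_{\textnormal{max-link}, \textnormal{UB}}(r) = \max(A, B)$ and $L_{\textnormal{max-link}, \textnormal{LB}}(r) = (C + B)/2$. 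Using the trivial inequality $\max(x, y) \leq x + y$ on nonnegative reals, the problem reduces to bounding the ratio $(A + B)/(C + B)$, which in turn reduces to comparing the per-$j$ coefficients of $A$ and $C$, since $B$ appears identically on both sides.

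The key step is then a short binomial identity. Using $\binom{\Lambda - \alpha}{j}\binom{\Lambda}{\alpha} = \binom{\Lambda}{\alpha + j}\binom{\alpha + j}{\alpha}$ together with $\binom{\alpha + j}{\alpha} = \binom{j + \alpha}{j}$, the per-$j$ ratio simplifies to
\begin{equation*}
  \frac{\binom{\Lambda - \alpha}{j}/\left(\binom{\Lambda}{j}(\binom{j + \alpha}{j} - 1)\right)}{\binom{\Lambda}{\alpha + j}/\left(\binom{\Lambda}{\alpha}\binom{\Lambda}{j}\right)} \;=\; \frac{\binom{j + \alpha}{j}}{\binom{j + \alpha}{j} - 1}.
\end{equation*}
Since $\binom{j + \alpha}{j} \geq 2$ whenever $j, \alpha \geq 1$, this ratio is at most $2$, with equality only in the corner case $\alpha = j = 1$. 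In particular, $A \leq 2 C$ uniformly in $\alpha$ and in the LP optimizer. (For $\alpha \geq 2$ one in fact gets $\binom{j + \alpha}{j} \geq 3$, so the ratio is at most $3/2$, which is consistent with \Cref{thm: Order Optimality Without Download Cost}.)

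Chaining the two bounds I then obtain
\begin{equation*}
  L_{\textnormal{max-link}, \textnormal{UB}}(r) \leq A + B \leq 2 C + B \leq 2(C + B) = 4 \cdot L_{\textnormal{max-link}, \textnormal{LB}}(r),
\end{equation*}
which is precisely the claimed multiplicative gap of $4$. The main obstacle is the algebraic identification of the per-term ratio $A_j / C_j$ as $\binom{j + \alpha}{j}/(\binom{j + \alpha}{j} - 1)$; once this clean form is obtained, the remainder is just monotonicity of $x \mapsto x/(x-1)$ together with the termwise bound $\max \leq \mathrm{sum}$. I expect no difficulty in handling the endpoint $r = \Lambda - \alpha + 1$ or in ensuring that the bound is uniform over $r$, since the proof is entirely termwise with respect to $\tilde{a}^{j}_{\star}$ and thus automatically transfers along the LP-based memory sharing that defines both $L_{\textnormal{max-link}, \textnormal{UB}}(r)$ and $L_{\textnormal{max-link}, \textnormal{LB}}(r)$.
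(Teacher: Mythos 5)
Your proposal is correct and follows essentially the same route as the paper's proof: both relax $\max(A,B) \leq A + B$ and reduce the problem to the termwise ratio of the achievable and converse communication-load coefficients, which simplifies via $\binom{\Lambda - \alpha}{j}\binom{\Lambda}{\alpha} = \binom{\Lambda}{\alpha + j}\binom{\alpha + j}{\alpha}$ to $\binom{j + \alpha}{j}/\left(\binom{j + \alpha}{j} - 1\right) \leq 2$; the only difference is that the paper applies a mediant-type inequality to the combined per-$j$ coefficients $(A_j + B_j)/(C_j + B_j)$ and splits into the cases $j \leq \Lambda - \alpha$ and $j > \Lambda - \alpha$, whereas you keep $B$ intact and chain $A + B \leq 2C + B \leq 2(C + B)$, which is marginally cleaner. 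The one point worth making explicit is that for $j > \Lambda - \alpha$ both $\binom{\Lambda - \alpha}{j}$ and $\binom{\Lambda}{\alpha + j}$ vanish, so the displayed per-$j$ ratio is formally $0/0$; the inequality $A_j \leq 2 C_j$ you actually need holds trivially there (both sides are zero), so the argument goes through unchanged.
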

\begin{proof}
    The proof is described in~\refappendix{sec: Order Optimality With Download Cost Proof}.
\end{proof}

\begin{remark}
    Interestingly, the results in this section can also be derived --- while keeping the same constant factor of at most $4$ from the optimal --- for a weighted max-link load defined as $L_{\text{max-link}, \beta} \coloneqq \max(L, \beta J)$ for some $\beta \geq 0$. Notice that such metric is of particular interest whenever we want to account for a \emph{weighted} download cost $\beta J$, in which case the cost of communicating data from mappers to reducers is \emph{proportional} to the maximal number of normalized bits flowing from any one mapper to any one reducer. The extreme case $\beta = 0$ corresponds to the communication load $L_{\text{max-link}, 0} = L$ characterized in \Cref{sec:  Characterizing the Communication Load}, whereas the case $\beta = 1$ implies the max-link load $L_{\text{max-link}, 1} = L_{\text{max-link}}$ investigated in the current section.
\end{remark}

\section{Illustrative Example of the Coded Scheme}\label{sec: Illustrative Example}

In this section, we propose an illustrative example of the coded scheme which will be later presented in its general form in \Cref{sec: Achievability Without Download Cost Proof}. The example refers to the schematic in \Cref{fig: Multi-Access Distributed Computing Example}. Notice that this section aims to provide an example for the achievable scheme in \Cref{thm: Achievable Bound Without Download Cost}, consequently the download cost will be neglected.

We consider $\Lambda = 4$ mappers and $K = \binom{\Lambda}{\alpha} = 6$ reducers where $\alpha = 2$. We assume to have $Q = 12$ output functions to be computed across the ensemble of mappers and reducers, $N = 8$ input files $\{w_n : n \in [8]\}$ and computation load $r = 1$. Recalling that the computation load is defined as the normalized number of files which are mapped across the $\Lambda$ map nodes, having computation load equal to $r = 1$ implies that the number of files mapped across all mappers is equal to the size of the input data set, i.e., $N$ files. Under the uniform function assignment assumption, we assign $Q/K = 2$ output functions to each reducer\footnote{For the sake of simplicity, we will often omit braces and commas when indicating sets, e.g., the reducer $\{1, 2\}$, which is connected to mappers $1$ and $2$, can be simply denoted as $12$.} $\mathcal{U} \in [4]_2 = \{12, 13, 14, 23, 24, 34\}$. Thus, recalling that $\mathcal{W}_{\mathcal{U}}$ represents the indices of the reduce functions assigned to reducer $\mathcal{U} \in [4]_2$, we arbitrarily let
\begin{align}
    \mathcal{W}_{12} & = \{1, 2\} \\
    \mathcal{W}_{13} & = \{3, 4\} \\
    \mathcal{W}_{14} & = \{5, 6\} \\
    \mathcal{W}_{23} & = \{7, 8\} \\
    \mathcal{W}_{24} & = \{9, 10\} \\
    \mathcal{W}_{34} & = \{11, 12\}.
\end{align}

\subsection{Map Phase}

This phase requires the input files to be split among mappers and so we proceed by grouping the $N = 8$ files into $4$ batches $\mathcal{B}_\lambda$ for each $\lambda \in [4]$ as
\begin{align}
    \mathcal{B}_1 & = \{w_1, w_2\} \\
    \mathcal{B}_2 & = \{w_3, w_4\} \\
    \mathcal{B}_3 & = \{w_5, w_6\} \\
    \mathcal{B}_4 & = \{w_7, w_8\}.
\end{align}
Then, recalling that $\mathcal{M}_\lambda$ represents the set of files mapped by the mapper $\lambda \in [\Lambda]$, we set here $\mathcal{M}_\lambda = \{\mathcal{B}_{\mathcal{T}_1} : \mathcal{T}_1 \subseteq [4], |\mathcal{T}_1| = 1, \lambda \in \mathcal{T}_1\} = \{\mathcal{B}_\lambda\}$.

Since each mapper is assigned $2$ input files, we have that $|\mathcal{M}_\lambda| = |\mathcal{B}_\lambda| = 2$ for each $\lambda \in [4]$. Hence, we can check that such file assignment satisfies the computation load constraint $r = 1$, as indeed we have
\begin{equation}
    \frac{\sum_{\lambda \in [4]} |\mathcal{M}_\lambda|}{N} = \frac{|\mathcal{M}_1| + |\mathcal{M}_2| + |\mathcal{M}_3| + |\mathcal{M}_4| }{8} = 1.
\end{equation}

Each mapper computes $Q = 12$ intermediate values for each assigned input file. In particular, recalling that $\mathcal{V}_{\lambda}$ is the set of IVs computed by the mapper $\lambda \in [4]$, we have
\begin{align}
    \mathcal{V}_1 & = \{v_{q, n}: q \in [12], w_n \in \mathcal{M}_1\} \\
    \mathcal{V}_2 & = \{v_{q, n}: q \in [12], w_n \in \mathcal{M}_2\} \\
    \mathcal{V}_3 & = \{v_{q, n}: q \in [12], w_n \in \mathcal{M}_3\} \\
    \mathcal{V}_4 & = \{v_{q, n}: q \in [12], w_n \in \mathcal{M}_4\}.
\end{align}

For example, since mapper $1$ is assigned the files in $\mathcal{M}_1 = \{w_1, w_2\}$, it will compute the intermediate values $v_{q, 1}$ and $v_{q, 2}$ for all $q \in [12]$. Then, mapper $2$ will compute the intermediate values $v_{q, 3}$ and $v_{q, 4}$ for all $q \in [12]$ since $\mathcal{M}_2 = \{w_3, w_4\}$. Similarly, mapper $3$ and mapper $4$ will compute the intermediate values $v_{q, 5}$ and $v_{q, 6}$, and $v_{q, 7}$ and $v_{q, 8}$, respectively, for all $q \in [12]$ since $\mathcal{M}_{3} = \{w_{5}, w_{6}\}$ and $\mathcal{M}_{4} = \{w_{7}, w_{8}\}$.

Now, considering that there is a reducer connected to any $\alpha = 2$ mappers, we know that each reducer $\mathcal{U} \in [4]_2$ can retrieve the IVs computed by the $2$ mappers in $\mathcal{U}$. Recalling that $\mathcal{V}_{\mathcal{U}}$ denotes the union set of IVs computed by the mappers in $\mathcal{U}$, we have
\begin{align}
    \mathcal{V}_{12} & = \{v_{q, n}: q \in [12], w_n \in \mathcal{M}_{12}\} \\
    \mathcal{V}_{13} & = \{v_{q, n}: q \in [12], w_n \in \mathcal{M}_{13}\} \\
    \mathcal{V}_{14} & = \{v_{q, n}: q \in [12], w_n \in \mathcal{M}_{14}\} \\
    \mathcal{V}_{23} & = \{v_{q, n}: q \in [12], w_n \in \mathcal{M}_{23}\} \\
    \mathcal{V}_{24} & = \{v_{q, n}: q \in [12], w_n \in \mathcal{M}_{24}\} \\
    \mathcal{V}_{34} & = \{v_{q, n}: q \in [12], w_n \in \mathcal{M}_{34}\}.
\end{align}

\subsection{Shuffle Phase}

We describe now how each reducer $\mathcal{U} \in [4]_2$ constructs its multicast message $X_{\mathcal{U}}$. Since the procedure is the same for each reducer, we continue our example by focusing for simplicity on reducer $\{1, 2\}$ only.

First of all, we let $\mathcal{S} \subseteq ([4] \setminus \{1, 2\})$ with $|\mathcal{S}| = 1$. Then, for each $\mathcal{R} \subseteq (\mathcal{S} \cup \{1, 2\})$ such that $|\mathcal{R}| = 2$ and $\mathcal{R} \neq \{1, 2\}$, and for $\mathcal{T}_1 = (\mathcal{S} \cup \{1, 2\}) \setminus \mathcal{R}$, reducer $\{1, 2\}$ concatenates the intermediate values $\{v_{q, n} : q \in \mathcal{W}_\mathcal{R}, w_n \in \mathcal{B}_{\mathcal{T}_1}\}$ into the symbol $U_{\mathcal{W}_\mathcal{R}, \mathcal{T}_1} = (v_{q, n} : q \in \mathcal{W}_\mathcal{R}, w_n \in \mathcal{B}_{\mathcal{T}_1})$. Notice that having $\mathcal{R} \neq \{1, 2\}$ implies that $\mathcal{T}_1 \cap \{1, 2\} \neq \emptyset$, so reducer $\{1, 2\}$ can retrieve $\mathcal{B}_{\mathcal{T}_1}$ from the mappers it is connected to and can construct the symbol $U_{\mathcal{W}_\mathcal{R}, \mathcal{T}_1}$. Subsequently, such symbol is evenly split as
\begin{equation}
    U_{\mathcal{\mathcal{W}_\mathcal{R}}, \mathcal{T}_1} = \left(U_{\mathcal{\mathcal{W}_\mathcal{R}}, \mathcal{T}_1, \mathcal{T}_2} : \mathcal{T}_2 \subseteq (\mathcal{R} \cup \mathcal{T}_1), |\mathcal{T}_2| = 2, \mathcal{T}_2 \neq \mathcal{R} \right).
\end{equation}
This means that when, say, $\mathcal{S} = \{3\}$, reducer $\{1, 2\}$ creates the symbols
\begin{align}
    U_{\mathcal{W}_{13}, 2} & = (v_{q, n} : q \in \mathcal{W}_{13}, w_n \in \mathcal{B}_2) \\
    U_{\mathcal{W}_{23}, 1} & = (v_{q, n} : q \in \mathcal{W}_{23}, w_n \in  \mathcal{B}_1)
\end{align}
and when $\mathcal{S} = \{4\}$, the same reducer $\{1, 2\}$ creates the symbols
\begin{align}
    U_{\mathcal{W}_{14}, 2} & = (v_{q, n} : q \in \mathcal{W}_{14}, w_n \in \mathcal{B}_2) \\
    U_{\mathcal{W}_{24}, 1} & = (v_{q, n} : q \in \mathcal{W}_{24}, w_n \in \mathcal{B}_1).
\end{align}
Each of the symbols above is then evenly split in two segments as
\begin{align}
    U_{\mathcal{W}_{13}, 2} & = (U_{\mathcal{W}_{13}, 2, 12}, U_{\mathcal{W}_{13}, 2, 23}) \\
    U_{\mathcal{W}_{23}, 1} & = (U_{\mathcal{W}_{23}, 1, 12}, U_{\mathcal{W}_{23}, 1, 13}) \\
    U_{\mathcal{W}_{14}, 2} & = (U_{\mathcal{W}_{14}, 2, 12}, U_{\mathcal{W}_{14}, 2, 24}) \\
    U_{\mathcal{W}_{24}, 1} & = (U_{\mathcal{W}_{24}, 1, 12}, U_{\mathcal{W}_{24}, 1, 14}).
\end{align}
Next, reducer $\{1, 2\}$ constructs the coded message
\begin{equation}
    \bigoplus_{\mathcal{R} \subseteq (\mathcal{S} \cup \{1, 2\}): |\mathcal{R}| = 2, \mathcal{R} \neq \{1, 2\}} U_{\mathcal{W}_{\mathcal{R}}, (\mathcal{S} \cup \{1, 2\}) \setminus \mathcal{R}, 12}
\end{equation}
for each $\mathcal{S} \subseteq ([4] \setminus \{1, 2\})$ with $|\mathcal{S}| = 1$, and concatenates all of them to form $X_{12}$, which is given by
\begin{align}
    X_{12} & =\left( \bigoplus_{\mathcal{R} \subseteq (\mathcal{S} \cup \{1, 2\}): |\mathcal{R}| = 2, \mathcal{R} \neq \{1, 2\}} U_{\mathcal{W}_{\mathcal{R}}, (\mathcal{S} \cup \{1, 2\}) \setminus \mathcal{R}, 12} : \mathcal{S} \subseteq ([\Lambda] \setminus \{1, 2\}), |\mathcal{S}| = 1 \right) \\ 
    & = (U_{\mathcal{W}_{13}, 2, 12} \oplus U_{\mathcal{W}_{23}, 1, 12}, U_{\mathcal{W}_{24}, 1, 12} \oplus U_{\mathcal{W}_{14}, 2, 12}).
\end{align}

Similarly, each other reducer prepares and multicasts its message following the procedure described above. In the end, the following messages
\begin{align}
    X_{12} & = (U_{\mathcal{W}_{13}, 2, 12} \oplus U_{\mathcal{W}_{23}, 1, 12}, U_{\mathcal{W}_{24}, 1, 12} \oplus U_{\mathcal{W}_{14}, 2, 12}) \\
    X_{13} & = (U_{\mathcal{W}_{12}, 3, 13} \oplus U_{\mathcal{W}_{23}, 1, 13}, U_{\mathcal{W}_{14}, 3, 13} \oplus U_{\mathcal{W}_{34}, 1, 13}) \\
    X_{14} & = (U_{\mathcal{W}_{12}, 4, 14} \oplus U_{\mathcal{W}_{24}, 1, 14}, U_{\mathcal{W}_{13}, 4, 14} \oplus U_{\mathcal{W}_{34}, 1, 14}) \\
    X_{23} & = (U_{\mathcal{W}_{12}, 3, 23} \oplus U_{\mathcal{W}_{13}, 2, 23}, U_{\mathcal{W}_{24}, 3, 23} \oplus U_{\mathcal{W}_{34}, 2, 23}) \\
    X_{24} & = (U_{\mathcal{W}_{12}, 4, 24} \oplus U_{\mathcal{W}_{14}, 2, 24}, U_{\mathcal{W}_{23}, 4, 24} \oplus U_{\mathcal{W}_{34}, 2, 24}) \\
    X_{34} & = (U_{\mathcal{W}_{13}, 4, 34} \oplus U_{\mathcal{W}_{14}, 3, 34}, U_{\mathcal{W}_{23}, 4, 34} \oplus U_{\mathcal{W}_{24}, 3, 34})
\end{align}
are exchanged among the reducers on the common-bus link during the shuffle phase.

\subsection{Reduce Phase}

As when describing the shuffle phase, we can again focus on reducer $\{1, 2\}$ and observe how it correctly computes the reduce functions in $\mathcal{W}_{12}$ by using the set of multicast messages $\{X_{\mathcal{U}} : \mathcal{U} \in [4]_2\}$ and the set $\mathcal{V}_{12}$ of IVs which the reducer $\{1, 2\}$ can access. Indeed, a similar procedure can be shown for all other reducers.

First of all, reducer $\{1, 2\}$ needs the IVs $\{v_{q, n} : q \in \mathcal{W}_{12}, n \in [8]\}$ to compute the reduce functions in $\mathcal{W}_{12}$. Since such reducer has already access to the IVs in $\mathcal{V}_{12}$, it can obtain the symbols $U_{\mathcal{W}_{12}, 1}$ and $U_{\mathcal{W}_{12}, 2}$. However, it misses the intermediate values $\{v_{q, n} : q \in \mathcal{W}_{12}, w_n \notin \mathcal{M}_{12}\}$ or, similarly, it misses the symbols $U_{\mathcal{W}_{12}, 3} = (v_{q, n} : q \in \mathcal{W}_{12}, w_n \in \mathcal{B}_3)$ and $U_{\mathcal{W}_{12}, 4} = (v_{q, n} : q \in \mathcal{W}_{12}, w_n \in \mathcal{B}_4)$. We see now how these symbols can be obtained from the set of multicast messages.

During the shuffle procedure, each symbol is split in two even segments, so, consequently, symbols $U_{\mathcal{W}_{12}, 3}$ and $U_{\mathcal{W}_{12}, 4}$ are split as
\begin{align}
    U_{\mathcal{W}_{12}, 3} & = (U_{\mathcal{W}_{12}, 3, 13}, U_{\mathcal{W}_{12}, 3, 23})\\
    U_{\mathcal{W}_{12}, 4} & = (U_{\mathcal{W}_{12}, 4, 14}, U_{\mathcal{W}_{12}, 4, 24}).
\end{align}
Now, reducer $\{1, 2\}$ can decode $U_{\mathcal{W}_{12}, 3, 13}$ from the message $X_{13}$. Indeed, the term $U_{\mathcal{W}_{12}, 3, 13} \oplus U_{\mathcal{W}_{23}, 1, 13}$ appears in $X_{13}$ and reducer $\{1, 2\}$ can use the IVs in $\mathcal{V}_{12}$ to cancel the interference term $U_{\mathcal{W}_{23}, 1, 13}$. Similarly, the term $U_{\mathcal{W}_{12}, 3, 23} \oplus U_{\mathcal{W}_{13}, 2, 23}$ appears in $X_{23}$, where again the interference $U_{\mathcal{W}_{13}, 2, 23}$ can be canceled by means of the IVs retrieved by the mappers $1$ and $2$. Hence, reducer $\{1, 2\}$ successfully decodes $U_{\mathcal{W}_{12}, 3, 13}$ and $U_{\mathcal{W}_{12}, 3, 23}$ from the multicasted messages $X_{13}$ and $X_{23}$, reconstructing then the symbol $U_{\mathcal{W}_{12}, 3}  = (U_{\mathcal{W}_{12}, 3, 13}, U_{\mathcal{W}_{12}, 3, 23})$. A similar procedure holds for reducer $\{1, 2\}$ to successfully reconstruct the symbol $U_{\mathcal{W}_{12}, 4}$, whose two segments are decoded from messages $X_{14}$ and $X_{24}$. Further, a similar procedure holds for any other reducer. Thus, we can conclude that every reducer is able to compute the assigned reduce functions after recovering the missing intermediate values from the messages multicasted by all reducers during the shuffle phase.

\subsection{Communication Load}

Recalling that the communication load is defined as the total number of bits transmitted by the $K$ reducers during the shuffle phase (normalized by the number of bits of all intermediate values), we wish to compute for this example this load, which takes the form
\begin{equation}
    L_{\textnormal{UB}}(r = 1) = \frac{\sum_{\mathcal{U} \in [\Lambda]_{\alpha}} |X_{\mathcal{U}}|}{QNT} = \frac{\sum_{\mathcal{U} \in [4]_{2}} |X_{\mathcal{U}}|}{96T}. 
\end{equation}
Since $|X_{\mathcal{U}}|$ is the same for each $\mathcal{U} \in [4]_{2}$, we focus again on reducer $\{1, 2\}$ and its multicast transmission $X_{12} = (U_{\mathcal{W}_{13}, 2, 12} \oplus U_{\mathcal{W}_{23}, 1, 12}, U_{\mathcal{W}_{24}, 1, 12} \oplus U_{\mathcal{W}_{14}, 2, 12})$, which contains two XOR messages. Focusing on the first message $U_{\mathcal{W}_{13}, 2, 12} \oplus U_{\mathcal{W}_{23}, 1, 12}$, we can see that it is a XOR composed of two segments, i.e., one segment for the symbol $U_{\mathcal{W}_{13}, 2}$ and one segment for the symbol $U_{\mathcal{W}_{23}, 1}$. Since the size of each symbol is $4T$ bits, the resulting XOR message has size $2T$ bits. Hence, given that $X_{12}$ contains two XOR messages, we can conclude that $|X_{12}| = 4T$ bits. Consequently, the resulting achievable communication load is given by
\begin{align}
    L_{\textnormal{UB}}(r = 1) = \frac{\sum_{\mathcal{U} \in [4]_{2}} |X_{\mathcal{U}}|}{96T} = \frac{24T}{96T} = \frac{1}{4}. 
\end{align}
Using the converse in \Cref{thm: Converse Bound Without Download Cost}, we can see that the achievable performance above is within a factor $1.5$ from the optimal.

\section{Proof of Achievable Bound in \texorpdfstring{\Cref{thm: Achievable Bound Without Download Cost}}{Theorem~\ref{thm: Achievable Bound Without Download Cost}}}\label{sec: Achievability Without Download Cost Proof}

We assume that there are $\Lambda$ mappers and $K = \binom{\Lambda}{\alpha}$ reducers, and we assume the aforementioned combinatorial topology where each reducer is exactly and uniquely connected to $\alpha$ mappers. We then consider some arbitrary computation load $r \in [\Lambda - \alpha + 1]$ and we consider $Q = \eta_2 K$ output functions with $\eta_2 \in \mathbb{N}^{+}$, allowing us to separate the $Q$ functions into $K$ disjoint groups $\mathcal{W}_{\mathcal{U}}$ for each $\mathcal{U} \in [\Lambda]_\alpha$, so that each reducer is assigned $\eta_2$ functions, corresponding to $|\mathcal{W}_{\mathcal{U}}| = \eta_2$ for each $\mathcal{U} \in [\Lambda]_{\alpha}$.

\subsection{Map Phase}

First, the input database is split in $\binom{\Lambda}{r}$ disjoint batches, each containing $\eta_1 = N/\binom{\Lambda}{r}$ files, where we assume that $N$ is large enough such that $\eta_1 \in \mathbb{N}^{+}$. Consequently, we have a batch of files for each $\mathcal{T}_1 \subseteq [\Lambda]$ such that $|\mathcal{T}_1| = r$, which implies
\begin{equation}
    \{w_1, \dots, w_N\} = \bigcup_{\mathcal{T}_1 \subseteq [\Lambda] : |\mathcal{T}_1| = r} \mathcal{B}_{\mathcal{T}_1}
\end{equation}
where we denote by $\mathcal{B}_{\mathcal{T}_1}$ the batch of $\eta_1$ files associated with the label $\mathcal{T}_1$. Then, mapper $\lambda \in [\Lambda]$ is assigned all batches $\mathcal{B}_{\mathcal{T}_1}$ having $\lambda \in \mathcal{T}_1$, which means that
\begin{equation}
    \mathcal{M}_\lambda = \{\mathcal{B}_{\mathcal{T}_1} : \mathcal{T}_1 \subseteq [\Lambda], |\mathcal{T}_1| = r, \lambda \in \mathcal{T}_1\}.
\end{equation}
We can see that the computation load constraint is satisfied, since we have
\begin{equation}
    \frac{\sum_{\lambda \in [\Lambda]} |\mathcal{M}_\lambda|}{N}  = \frac{\Lambda \eta_1 \binom{\Lambda - 1}{r - 1}}{\eta_1 \binom{\Lambda}{r}} = r
\end{equation}

Then, each mapper computes $Q$ intermediate values for each assigned input file, so for each $\lambda \in [\Lambda]$ we have $\mathcal{V}_{\lambda} = \{v_{q, n} : q \in [Q], w_n \in \mathcal{M}_{\lambda}\}$. Since then each reducer has access to $\alpha$ mappers, reducer $\mathcal{U} \in [\Lambda]_\alpha$ can retrieve\footnote{Since we are presenting here the proof of the achievable bound in \Cref{thm: Achievable Bound Without Download Cost}, we will neglect the download cost, assuming consequently that each reducer can access the IVs without any additional communication cost.} the intermediate values in $\mathcal{V}_{\mathcal{U}} = \{v_{q, n} : q \in [Q], w_n \in \mathcal{M}_{\mathcal{U}}\}$ recalling that $\mathcal{M}_{\mathcal{U}} = \cup_{\lambda \in \mathcal{U}} \mathcal{M}_\lambda$. Since $|\mathcal{V}_{\mathcal{U}}| = Q\eta_1\left( \binom{\Lambda}{r} - \binom{\Lambda - \alpha}{r}\right)$ for each $\mathcal{U} \in [\Lambda]_{\alpha}$, we can conclude that each computing node has access to all the intermediate values when $r \geq \Lambda - \alpha + 1$. Hence, we focus on the non-trivial regime $r \in [\Lambda - \alpha + 1]$ for any given $\Lambda$ and $\alpha$.

\subsection{Shuffle Phase}

Consider reducer $\mathcal{U} \in [\Lambda]_\alpha$. Let $\mathcal{S} \subseteq ([\Lambda] \setminus \mathcal{U})$ with $|\mathcal{S}| = r$. First, for each $\mathcal{R} \subseteq (\mathcal{S} \cup \mathcal{U})$ such that $|\mathcal{R}| = \alpha$ and $\mathcal{R} \neq \mathcal{U}$, and for $\mathcal{T}_1 = (\mathcal{S} \cup \mathcal{U}) \setminus \mathcal{R}$, reducer $\mathcal{U}$ concatenates the intermediate values $\{v_{q, n} : q \in \mathcal{W}_{\mathcal{R}}, w_n \in \mathcal{B}_{\mathcal{T}_1}\}$ into the symbol $U_{\mathcal{W}_{\mathcal{R}}, \mathcal{T}_1} = \left( v_{q, n} : q \in \mathcal{W}_{\mathcal{R}}, w_n \in \mathcal{B}_{\mathcal{T}_1} \right) \in \mathbb{F}_{2^{\eta_2 \eta_1 T}}$. Subsequently, such symbol is evenly split in $\left( \binom{r + \alpha}{r} - 1 \right)$ segments as
\begin{equation}
    U_{\mathcal{\mathcal{W}_\mathcal{R}}, \mathcal{T}_1} = \left(U_{\mathcal{\mathcal{W}_\mathcal{R}}, \mathcal{T}_1, \mathcal{T}_2} : \mathcal{T}_2 \subseteq (\mathcal{R} \cup \mathcal{T}_1), |\mathcal{T}_2| = \alpha, \mathcal{T}_2 \neq \mathcal{R} \right).
\end{equation}
Then, reducer $\mathcal{U}$ constructs the coded message
\begin{equation}
    \bigoplus_{\mathcal{R} \subseteq (\mathcal{S} \cup \mathcal{U}): |\mathcal{R}| = \alpha, \mathcal{R} \neq \mathcal{U}} U_{\mathcal{W}_{\mathcal{R}}, (\mathcal{S} \cup \mathcal{U}) \setminus \mathcal{R}, \mathcal{U}}
\end{equation}
for each $\mathcal{S} \subseteq ([\Lambda] \setminus \mathcal{U})$ with $|\mathcal{S}| = r$, and finally concatenates all of them into the following message
\begin{equation}
    X_{\mathcal{U}} = \left( \bigoplus_{\mathcal{R} \subseteq (\mathcal{S} \cup \mathcal{U}): |\mathcal{R}| = \alpha, \mathcal{R} \neq \mathcal{U}} U_{\mathcal{W}_{\mathcal{R}}, (\mathcal{S} \cup \mathcal{U}) \setminus \mathcal{R}, \mathcal{U}} : \mathcal{S} \subseteq ([\Lambda] \setminus \mathcal{U}), |\mathcal{S}| = r \right)
\end{equation}
which is multicasted to all other reducers via the error-free broadcast channel.

\subsection{Reduce Phase}

Consider reducer $\mathcal{U} \in [\Lambda]_\alpha$. Since such reducer is connected to $\alpha$ mappers, it misses a total of $\eta_2 \eta_1\binom{\Lambda - \alpha}{r}$ intermediate values, i.e, it misses $\eta_2$ intermediate values for each of the $\eta_1$ files in each batch that is not assigned to the mappers in $\mathcal{U}$. More precisely, reducer $\mathcal{U}$ misses the symbol $U_{\mathcal{W}_{\mathcal{U}}, \mathcal{T}_1}$ for each $\mathcal{T}_1 \subseteq ([\Lambda] \setminus \mathcal{U})$ with $|\mathcal{T}_1| = r$. We know that during the shuffle phase such symbol is evenly split in $\left( \binom{r + \alpha}{r} - 1 \right)$ segments as
\begin{equation}
    U_{\mathcal{W}_{\mathcal{U}}, \mathcal{T}_1} = \left(U_{\mathcal{W}_{\mathcal{U}}, \mathcal{T}_1, \mathcal{T}_2} : \mathcal{T}_2 \subseteq (\mathcal{U} \cup \mathcal{T}_1), |\mathcal{T}_2| = \alpha, \mathcal{T}_2 \neq \mathcal{U} \right).
\end{equation}
For each $\mathcal{T}_2 \subseteq (\mathcal{U} \cup \mathcal{T}_1)$ with $|\mathcal{T}_2| = \alpha$ and $\mathcal{T}_2 \neq \mathcal{U}$, we can verify that reducer $\mathcal{U}$ can decode $U_{\mathcal{W}_{\mathcal{U}}, \mathcal{T}_1, \mathcal{T}_2}$ from $X_{\mathcal{T}_2}$. Indeed, there exists an $\mathcal{S} \subseteq ([\Lambda] \setminus \mathcal{T}_2)$ with $|\mathcal{S}| = r$ such that $\mathcal{S} = (\mathcal{U}  \cup \mathcal{T}_1 ) \setminus \mathcal{T}_2$. For such $\mathcal{S}$, the corresponding coded message in $X_{\mathcal{T}_2}$ is 
\begin{align}
    \bigoplus_{\mathcal{R} \subseteq (\mathcal{S} \cup \mathcal{T}_2): |\mathcal{R}| = \alpha, \mathcal{R} \neq \mathcal{T}_2} U_{\mathcal{W}_{\mathcal{R}}, (\mathcal{S} \cup \mathcal{T}_2) \setminus \mathcal{R}, \mathcal{T}_2} & = \bigoplus_{\mathcal{R} \subseteq (\mathcal{U}  \cup \mathcal{T}_1) : |\mathcal{R}| = \alpha, \mathcal{R} \neq \mathcal{T}_2} U_{\mathcal{W}_{\mathcal{R}}, (\mathcal{U}  \cup \mathcal{T}_1 ) \setminus \mathcal{R}, \mathcal{T}_2} \\
    & = U_{\mathcal{W}_{\mathcal{U}}, \mathcal{T}_1, \mathcal{T}_2} \oplus \underbrace{\bigoplus_{\mathcal{R} \subseteq (\mathcal{U}  \cup \mathcal{T}_1) : |\mathcal{R}| = \alpha, \mathcal{R} \neq \mathcal{T}_2, \mathcal{R} \neq \mathcal{U}} U_{\mathcal{W}_{\mathcal{R}}, (\mathcal{U}  \cup \mathcal{T}_1 ) \setminus \mathcal{R}, \mathcal{T}_2}}_{\textnormal{interference}}.
\end{align}
Notice that reducer $\mathcal{U}$ can cancel the interference term by using the intermediate values retrieved from mappers in $\mathcal{U}$, so it can correctly decode $U_{\mathcal{W}_{\mathcal{U}}, \mathcal{T}_1, \mathcal{T}_2}$. By following the same rationale for each $\mathcal{T}_2 \subseteq (\mathcal{U} \cup \mathcal{T}_1)$ with $|\mathcal{T}_2| = \alpha$ and $\mathcal{T}_2 \neq \mathcal{U}$, we can conclude that reducer $\mathcal{U}$ can correctly recover $U_{\mathcal{W}_{\mathcal{U}}, \mathcal{T}_1}$ and can do so for each $\mathcal{T}_1 \subseteq ([\Lambda] \setminus \mathcal{U})$, completely recovering all the $\eta_2 \eta_1 \binom{\Lambda - \alpha}{r}$ missing intermediate values. The same holds for any other $\mathcal{U} \in [\Lambda]_\alpha$, and so we can conclude that each reducer is able to recover from the multicast messages of other reducers all the missing intermediate values.

\subsection{Communication Load}

The communication load guaranteed by the coded scheme described above is given by
\begin{align}
    L_{\textnormal{UB}}(r) & = \frac{\sum_{\mathcal{U} \in [\Lambda]_{\alpha}}|X_{\mathcal{U}}|}{QNT} \\
                        & = \frac{\binom{\Lambda}{\alpha} \eta_2 \eta_1 \binom{\Lambda - \alpha}{r} T/\left( \binom{r + \alpha}{r} - 1 \right)}{Q\eta_1\binom{\Lambda}{r}T} \\
                        & = \frac{\binom{\Lambda - \alpha}{r}}{\binom{\Lambda}{r}\left( \binom{r + \alpha}{r} - 1 \right)}
\end{align}
for each $r \in [\Lambda - \alpha + 1]$. Notice that the lower convex envelope of the achievable points $\{(r, L_{\textnormal{LB}}(r)) : r \in [\Lambda - \alpha + 1]\}$ is achievable by adopting the memory-sharing strategy presented in~\cite{Li2018FundamentalTradeoffComputation}. The proof is concluded. \qed

\section{Proof of Converse Bound in \texorpdfstring{\Cref{thm: Converse Bound Without Download Cost}}{Theorem~\ref{thm: Converse Bound Without Download Cost}}}\label{sec: Converse Without Download Cost Proof}

We begin the proof by introducing some useful notation. For $q \in [Q]$ and $n \in [N]$, we let $V_{q, n}$ be an i.i.d. random variable and we let $v_{q, n}$ be the realization of $V_{q, n}$. Then, we define
\begin{align}
    D_{\mathcal{U}} & \coloneqq \{V_{q, n} : q \in \mathcal{W}_{\mathcal{U}}, n \in [N]\} \\
    C_{\mathcal{U}} & \coloneqq \{V_{q, n} : q \in [Q], w_n \in \mathcal{M}_{\mathcal{U}}\} \\
    Y_{\mathcal{U}} & \coloneqq (D_{\mathcal{U}}, C_{\mathcal{U}}).
\end{align}
Recalling that we denote by $X_{\mathcal{U}}$ the multicast message transmitted by reducer $\mathcal{U} \in [\Lambda]_{\alpha}$, the equation
\begin{equation}
    H(X_{\mathcal{U}} \mid C_{\mathcal{U}}) = 0
\end{equation}
holds, since $X_{\mathcal{U}}$ is a function of the intermediate values retrieved by reducer $\mathcal{U}$. Moreover, for any map-shuffle-reduce scheme, each reducer $\mathcal{U} \in [\Lambda]_{\alpha}$ has to be able to correctly recover all the intermediate values $D_{\mathcal{U}}$ given the transmissions of all reducers $X_{[\Lambda]_\alpha} \coloneqq (X_{\mathcal{U}} : \mathcal{U} \in [\Lambda]_\alpha)$ and given the IVs $C_{\mathcal{U}}$ computed by the mappers in $\mathcal{U}$. Thus, the equation
\begin{equation}
    H(D_{\mathcal{U}} \mid X_{[\Lambda]_\alpha}, C_{\mathcal{U}}) = 0
\end{equation}
holds for each $\mathcal{U} \in [\Lambda]_{\alpha}$. 

\subsection{Lower Bound for a Given File Assignment}

For a given file assignment denoted by $\mathcal{M} \coloneqq (\mathcal{M}_1, \dots, \mathcal{M}_\Lambda)$, we let $L_{\mathcal{M}}$ be the corresponding communication load under this assignment $\mathcal{M}$. Then, we provide a lower bound on $L_{\mathcal{M}}$ for any given file assignment in the following lemma.

\begin{lemma}\label{lem: Lower Bound Lemma}
    Consider a specific file assignment $\mathcal{M} = (\mathcal{M}_1, \dots, \mathcal{M}_{\Lambda})$. Let $\bm{c} = (c_1, \dots, c_\Lambda)$ be a permutation of the set $[\Lambda]$ and define
    \begin{align}
         \mathcal{D}_{i} & \coloneqq (D_{\mathcal{U}^i} : \mathcal{U}^i \subseteq \{c_1, \dots, c_i\}, |\mathcal{U}^i| = \alpha, c_i \in \mathcal{U}^i) \\
         \mathcal{C}_{i} & \coloneqq (C_{\mathcal{U}^i} : \mathcal{U}^i \subseteq \{c_1, \dots, c_i\}, |\mathcal{U}^i| = \alpha, c_i \in \mathcal{U}^i) \\
         \mathcal{Y}_{i - 1} & \coloneqq (Y_{\mathcal{U}^{j}} : j \in [\alpha : i - 1], \mathcal{U}^{j} \subseteq \{c_1, \dots, c_{j}\}, |\mathcal{U}^{j}| = \alpha, c_{j} \in \mathcal{U}^{j})
    \end{align}
    for each $i \in [\alpha : \Lambda]$. Then, the communication load is lower bounded by
    \begin{equation}
        L_{\mathcal{M}} \geq \frac{1}{QNT} \sum_{i \in [\alpha : \Lambda]}H(\mathcal{D}_{i} \mid \mathcal{C}_{i}, \mathcal{Y}_{i - 1}).
    \end{equation}
\end{lemma}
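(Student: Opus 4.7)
The plan is to pass from the communication load to a single entropy term, and then bound that entropy from below by a telescoping sum driven by the decoding constraints. Since each $X_\mathcal{U} \in \mathbb{F}_{2^{\ell_\mathcal{U}}}$, we have $H(X_\mathcal{U}) \le \ell_\mathcal{U}$, so subadditivity yields
\begin{equation*}
QNT \cdot L_{\mathcal{M}} \;=\; \sum_{\mathcal{U} \in [\Lambda]_\alpha} \ell_{\mathcal{U}} \;\ge\; \sum_{\mathcal{U} \in [\Lambda]_\alpha} H(X_\mathcal{U}) \;\ge\; H(X_{[\Lambda]_\alpha}).
\end{equation*}
Everything that follows lower-bounds $H(X_{[\Lambda]_\alpha})$.

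The heart of the argument will be the one-step inequality
\begin{equation*}
H(X_{[\Lambda]_\alpha} \mid \mathcal{Y}_{i-1}) \;\ge\; H(\mathcal{D}_i \mid \mathcal{C}_i, \mathcal{Y}_{i-1}) \;+\; H(X_{[\Lambda]_\alpha} \mid \mathcal{Y}_i), \qquad i \in [\alpha:\Lambda].
\end{equation*}
To derive it, I would first drop entropy by additionally conditioning on $\mathcal{C}_i$, then expand the joint $H(X_{[\Lambda]_\alpha}, \mathcal{D}_i \mid \mathcal{Y}_{i-1}, \mathcal{C}_i)$ in two ways with the chain rule. On the one hand it equals $H(\mathcal{D}_i \mid \mathcal{C}_i, \mathcal{Y}_{i-1}) + H(X_{[\Lambda]_\alpha} \mid \mathcal{Y}_{i-1}, \mathcal{C}_i, \mathcal{D}_i)$; on the other it equals $H(X_{[\Lambda]_\alpha} \mid \mathcal{Y}_{i-1}, \mathcal{C}_i) + H(\mathcal{D}_i \mid X_{[\Lambda]_\alpha}, \mathcal{Y}_{i-1}, \mathcal{C}_i)$. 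The last term vanishes, since the decoding hypothesis $H(D_{\mathcal{U}^i} \mid X_{[\Lambda]_\alpha}, C_{\mathcal{U}^i}) = 0$ for each qualifying $\mathcal{U}^i$ makes every component of $\mathcal{D}_i$ a function of $(X_{[\Lambda]_\alpha}, \mathcal{C}_i)$. Moreover, $\mathcal{Y}_i$ is defined precisely so that $\mathcal{Y}_i = \mathcal{Y}_{i-1} \cup (\mathcal{D}_i, \mathcal{C}_i)$, which identifies $H(X_{[\Lambda]_\alpha} \mid \mathcal{Y}_{i-1}, \mathcal{C}_i, \mathcal{D}_i)$ with $H(X_{[\Lambda]_\alpha} \mid \mathcal{Y}_i)$.

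With the one-step inequality, the plan is to iterate from $i = \alpha$ --- where $\mathcal{Y}_{\alpha-1}$ is vacuous because $[\alpha:\alpha-1] = \emptyset$ --- up to $i = \Lambda$, so that the $H(X_{[\Lambda]_\alpha} \mid \mathcal{Y}_i)$ produced on the right at step $i$ is absorbed as the left-hand side at step $i+1$. Dropping the non-negative residual $H(X_{[\Lambda]_\alpha} \mid \mathcal{Y}_\Lambda)$ and combining with the cut-set bound gives exactly the claimed inequality. The only delicate point is the set-theoretic bookkeeping: verifying that, under the nested definitions indexed by subsets $\mathcal{U}^j \subseteq \{c_1, \dots, c_j\}$ of size $\alpha$ with the freshest element $c_j$ pinned inside, the increment $\mathcal{Y}_i \setminus \mathcal{Y}_{i-1}$ is exactly $(\mathcal{D}_i, \mathcal{C}_i)$ with no overlap between layers. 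Apart from this, the argument uses nothing about the specific file assignment $\mathcal{M}$ beyond the two structural identities $H(X_\mathcal{U} \mid C_\mathcal{U}) = 0$ and $H(D_\mathcal{U} \mid X_{[\Lambda]_\alpha}, C_\mathcal{U}) = 0$, which is what makes the bound hold uniformly over every $\mathcal{M}$ and every permutation $\bm{c}$ --- leaving room for the subsequent step in the converse (averaging over $\bm{c}$ and then over $\mathcal{M}$) to squeeze out the tightest bound.
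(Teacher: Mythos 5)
Your proposal is correct and follows essentially the same route as the paper: the cut-set step $QNT\,L_{\mathcal{M}} \geq H(X_{[\Lambda]_\alpha})$, the one-step chain-rule expansion whose cross term vanishes by decodability and subadditivity, and the telescoping over $i \in [\alpha:\Lambda]$. The only cosmetic differences are that you treat $i=\alpha$ uniformly instead of as a separate base case and drop $H(X_{[\Lambda]_\alpha} \mid \mathcal{Y}_\Lambda)$ as non-negative rather than observing it equals zero; both are fine.
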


\begin{proof}
    The proof is described in \refappendix{sec: Lower Bound Lemma Proof}.
\end{proof}

It is perhaps interesting to highlight that the lemma above manages to combine relatively divergent ideas from~\cite[Lemma 2]{Woolsey2021CombinatorialDesignCascaded} and~\cite[Lemma 2]{Brunero2021FundamentalLimitsCombinatorial}. On one hand, the proof of \Cref{lem: Lower Bound Lemma} is based on the iterative argument from the proof of~\cite[Lemma 2]{Woolsey2021CombinatorialDesignCascaded}, where the authors built a sequence of entropy-based bounds by iteratively picking computing nodes without ordering them according to some specific permutations. On the other hand, since in our case we wish to keep into account the multi-access nature of our MADC system, the proof of \Cref{lem: Lower Bound Lemma} adapts the entropy-based approach from~\cite{Woolsey2021CombinatorialDesignCascaded} by iteratively selecting the reducers according to some properly chosen permutations. The purpose of selecting reducers according to some proper permutations is that of constructing a tighter sequence of entropy-based bounds. The properly chosen permutations are inspired by~\cite[Lemma 2]{Brunero2021FundamentalLimitsCombinatorial}, which was used to successfully develop a tight converse bound for the multi-access coded caching problem with combinatorial topology.

Now, we proceed with the proof. Denote by $\tilde{a}^{\mathcal{T}}$ the number of files which are mapped exclusively by the mappers in $\mathcal{T}$ for some $\mathcal{T} \subseteq [\Lambda]$. As each reducer $\mathcal{U} \in [\Lambda]_{\alpha}$ does not have access to the intermediate values of all those files that are not mapped by the mappers in $\mathcal{U}$, the term $\tilde{a}^{\mathcal{T}}$ represents the number of files whose intermediate values are required by each reducer $\mathcal{U} \in [\Lambda]_{\alpha}$ that does not have access to the mappers in $\mathcal{T}$, i.e., each reducer $\mathcal{U} \in [\Lambda]_{\alpha}$ such that $\mathcal{U} \cap \mathcal{T} = 0$ or, equivalently, each reducer $\mathcal{U} \subseteq ([\Lambda] \setminus \mathcal{T})$ such that $|\mathcal{U}| = \alpha$. Taking advantage of the independence of the intermediate values and recalling that each reducer computes $\eta_2$ disjoint output functions, from \Cref{lem: Lower Bound Lemma} and for a given permutation $\bm{c} = (c_1, \dots, c_{\Lambda})$ of the set $[\Lambda]$, we can further write
\begin{align}
    L_{\mathcal{M}} & \geq \frac{1}{QNT}\sum_{i \in [\alpha : \Lambda]} H(\mathcal{D}_{i} \mid \mathcal{C}_{i}, \mathcal{Y}_{i - 1}) \\
             & = \frac{1}{QNT}\sum_{i \in [\alpha : \Lambda]} \sum_{\mathcal{U}^i \subseteq \{c_1, \dots, c_i\} : |\mathcal{U}^i| = \alpha, c_i \in \mathcal{U}^i} H(D_{\mathcal{U}^i} \mid \mathcal{C}_i, \mathcal{Y}_{i - 1}) \\
             & = \frac{1}{QNT} \sum_{i \in [\alpha : \Lambda]} \sum_{\mathcal{U}^i \subseteq \{c_1, \dots, c_i\} : |\mathcal{U}^i| = \alpha, c_i \in \mathcal{U}^i} \sum_{\mathcal{T} \subseteq [\Lambda] \setminus \{c_1, \dots, c_i\}} \tilde{a}^{\mathcal{T}} \eta_2 T \\
             & = \frac{1}{KN} \sum_{i \in [\alpha : \Lambda]} \sum_{\mathcal{U}^i \subseteq \{c_1, \dots, c_i\} : |\mathcal{U}^i| = \alpha, c_i \in \mathcal{U}^i} \sum_{\mathcal{T} \subseteq [\Lambda] \setminus \{c_1, \dots, c_i\}} \tilde{a}^{\mathcal{T}}.
\end{align}
If we build a bound as the one in \Cref{lem: Lower Bound Lemma} for each permutation of the set $[\Lambda]$ and we sum up all these bounds together, we obtain the expression
\begin{equation}\label{eqn: Lower Bound File Assignment}
    L_{\mathcal{M}} \geq \frac{1}{KN\Lambda !} \sum_{\bm{c} \in S_{\Lambda}} \sum_{i \in [\alpha : \Lambda]} \sum_{\mathcal{U}^i \subseteq \{c_1, \dots, c_i\} : |\mathcal{U}^i| = \alpha, c_i \in \mathcal{U}^i} \sum_{\mathcal{T} \subseteq [\Lambda] \setminus \{c_1, \dots, c_i\}} \tilde{a}^{\mathcal{T}}
\end{equation}
where we recall that $S_{\Lambda}$ represents the group of all permutations of $[\Lambda]$. Our goal now is to simplify this expression and we start doing so by counting how many times each term $\tilde{a}^{\mathcal{T}}$ appears in the RHS of~\eqref{eqn: Lower Bound File Assignment} for any fixed $\mathcal{T} \subseteq [\Lambda]$ with $|\mathcal{T}| = j$ and $j \in [\Lambda]$.

First, we focus on some reducer $\mathcal{U} \subseteq ([\Lambda] \setminus \mathcal{T})$ with $|\mathcal{U}| = \alpha$. We can see that $\tilde{a}^{\mathcal{T}}$ appears in the RHS of~\eqref{eqn: Lower Bound File Assignment} for all those permutations in $S_{\Lambda}$ for which $\mathcal{U} = \mathcal{U}^{i}$ for some $i \in [\alpha : \Lambda]$ such that $\mathcal{U}^{i} \subseteq \{c_1, \dots, c_i\}$ with $|\mathcal{U}^i| = \alpha$ and $c_i \in \mathcal{U}^i$, and such that $\mathcal{T} \subseteq ([\Lambda] \setminus \{c_1, \dots, c_i\})$. Denoting by $\mathcal{P}_{\mathcal{U}, \mathcal{T}}$ the set of such permutations, we can see that
\begin{equation}
    |\mathcal{P}_{\mathcal{U}, \mathcal{T}}| = \alpha! j! (\Lambda - \alpha - j)! \binom{\Lambda}{\alpha + j}.
\end{equation}
The same reasoning applies to any reducer $\mathcal{U} \in [\Lambda]_{\alpha}$ for which $\mathcal{U} \cap \mathcal{T} = \emptyset$. As a consequence, the term $\tilde{a}^{\mathcal{T}}$ appears in the RHS of~\eqref{eqn: Lower Bound File Assignment} a total of 
\begin{equation}
    \sum_{\mathcal{U} \in [\Lambda]_{\alpha} : \mathcal{U} \cap \mathcal{T} = \emptyset} |\mathcal{P}_{\mathcal{U}, \mathcal{T}}| = \binom{\Lambda - j}{\alpha} \alpha! j! (\Lambda - \alpha - j)! \binom{\Lambda}{\alpha + j}
\end{equation}
times. The same rationale holds for any $\tilde{a}^{\mathcal{T}}$ where $\mathcal{T} \subseteq [\Lambda]$ and $|\mathcal{T}| = j$ with $j \in [\Lambda]$. Consequently, we can rewrite the expression in~\eqref{eqn: Lower Bound File Assignment} as
\begin{align}
    L_{\mathcal{M}} & \geq \frac{1}{KN\Lambda !} \sum_{\bm{c} \in S_{\Lambda}} \sum_{i \in [\alpha : \Lambda]} \sum_{\mathcal{U}^i \subseteq \{c_1, \dots, c_i\} : |\mathcal{U}^i| = \alpha, c_i \in \mathcal{U}^i} \sum_{\mathcal{T} \subseteq [\Lambda] \setminus \{c_1, \dots, c_i\}} \tilde{a}^{\mathcal{T}} \\
             & = \frac{1}{KN\Lambda !} \sum_{j \in [\Lambda]} \sum_{\mathcal{T} \subseteq [\Lambda] : |\mathcal{T}| = j} \binom{\Lambda - j}{\alpha}\alpha! j! (\Lambda - \alpha - j)! \binom{\Lambda}{\alpha + j} \tilde{a}^{\mathcal{T}} \\
             & = \frac{1}{KN} \sum_{j \in [\Lambda]} \frac{\binom{\Lambda}{\alpha + j}}{\binom{\Lambda}{j}} \sum_{\mathcal{T} \subseteq [\Lambda] : |\mathcal{T}| = j} \tilde{a}^{\mathcal{T}} \\
             & = \frac{1}{K} \sum_{j \in [\Lambda]} \frac{\binom{\Lambda}{\alpha + j}}{\binom{\Lambda}{j}} \frac{\tilde{a}^j_{\mathcal{M}}}{N}
\end{align}
where $\tilde{a}^j_{\mathcal{M}} \coloneqq \sum_{\mathcal{T} \subseteq [\Lambda] : |\mathcal{T}| = j} \tilde{a}^{\mathcal{T}}$ is defined as the total number of files which are mapped by exactly $j$ map nodes under this particular file assignment $\mathcal{M}$.

For any given file assignment $\mathcal{M}$ and for any given computation load $r \in [K]$, the fact that $|\mathcal{M}_1| + \dots + |\mathcal{M}_{\Lambda}| \leq rN$ also implies that $\tilde{a}^j_{\mathcal{M}} \geq 0$ for each $j \in [\Lambda]$, as well as implies that $\sum_{j \in [\Lambda]}\tilde{a}^j_{\mathcal{M}} = N$ and that $\sum_{j \in [\Lambda]}j \tilde{a}^j_{\mathcal{M}} \leq rN$. Thus, we can further lower bound the above using Jensen's inequality and the fact that $\binom{\Lambda}{\alpha + j}/\binom{\Lambda}{j}$ is convex and decreasing\footnote{This was already proved in the proof of~\cite[Lemma 3]{Brunero2021FundamentalLimitsCombinatorial} by writing down each combinatorial coefficient in $\binom{\Lambda}{\alpha + j}/\binom{\Lambda}{j}$ as a finite product and using then the general Leibniz rule to show that its second derivative is non-negative.} in $j$. Hence, we can write
\begin{align}
    L_{\mathcal{M}} & \geq \frac{1}{K} \sum_{j \in [\Lambda]} \frac{\binom{\Lambda}{\alpha + j}}{\binom{\Lambda}{j}} \frac{\tilde{a}^j_{\mathcal{M}}}{N} \\
                            & \geq \frac{1}{K} \frac{\binom{\Lambda}{\alpha + r}}{\binom{\Lambda}{r}} \label{eqn: Storage Constraint} \\
                            & = \frac{\binom{\Lambda}{\alpha + r}}{\binom{\Lambda}{\alpha}\binom{\Lambda}{r}} \label{eqn: Refined Lower Bound}
\end{align}
where~\eqref{eqn: Storage Constraint} holds due to the storage constraint $\sum_{j \in [\Lambda]}j \tilde{a}^j_{\mathcal{M}} \leq rN$.

\subsection{Lower Bound Over All Possible File Assignments}

To obtain the bound in \Cref{thm: Converse Bound Without Download Cost}, we are looking for the smallest $L_{\mathcal{M}}$ across all file assignments $\mathcal{M}$ such that $|\mathcal{M}_1| + \dots + |\mathcal{M}_{\Lambda}| \leq rN$, that is we are looking for
\begin{equation}
    L^{\star}(r) \geq \inf_{\mathcal{M} : |\mathcal{M}_1| + \dots + |\mathcal{M}_{\Lambda}| \leq rN} L_{\mathcal{M}}.
\end{equation}
Given that~\eqref{eqn: Refined Lower Bound} is independent of the file assignment $\mathcal{M}$ and lower bounds $L_{\mathcal{M}}$ for any $\mathcal{M}$ such that $|\mathcal{M}_1| + \dots + |\mathcal{M}_{\Lambda}| \leq rN$, we can further write
\begin{align}
    L^{\star}(r) & \geq \inf_{\mathcal{M} : |\mathcal{M}_1| + \dots + |\mathcal{M}_{\Lambda}| \leq rN} L_{\mathcal{M}} \\
                 & \geq \inf_{\mathcal{M} : |\mathcal{M}_1| + \dots + |\mathcal{M}_{\Lambda}| \leq rN} \frac{\binom{\Lambda}{\alpha + r}}{\binom{\Lambda}{\alpha}\binom{\Lambda}{r}} \\
                 & = \frac{\binom{\Lambda}{\alpha + r}}{\binom{\Lambda}{\alpha}\binom{\Lambda}{r}} \\
                 & = L_{\textnormal{LB}}(r).
\end{align}
Notice that the bound $L_{\textnormal{LB}}(r)$ can be extended to include also the non-integer values of $r$ as described in~\cite{Li2018FundamentalTradeoffComputation}. This concludes the proof. \qed

\section{Proof of Achievable Bound in \texorpdfstring{\Cref{thm: Achievable Bound With Download Cost}}{Theorem~\ref{thm: Achievable Bound With Download Cost}}}\label{sec: Achievability With Download Cost Proof}

As we mentioned in the statement of \Cref{thm: Achievable Bound With Download Cost}, the coded scheme depends on the solution of the linear program in~\eqref{eqn: Linear Program}. Hence, the first step is to evaluate the optimal solution\footnote{The linear program in~\eqref{eqn: Linear Program} is not infeasible nor unbounded. Hence, it admits an optimal solution.} $\tilde{\bm{a}}_{\star} = (\tilde{a}^{1}_{\star}, \dots, \tilde{a}^{\Lambda}_{\star})$. Next, we partition the input database in $\Lambda$ parts, where we denote by $\mathcal{L}_{j}$ the $j$-th part, which contains $|\mathcal{L}_{j}| = \tilde{a}^{j}_{\star}$ files for each $j \in [\Lambda]$. Then, each part $j \in [\Lambda]$ of the database is split in $\binom{\Lambda}{j}$ batches containing $\eta_{j}$ files each for some $\eta_{j} \in \mathbb{N}$, so that $\tilde{a}^{j}_{\star} = \eta_{j} \binom{\Lambda}{j}$ for each $j \in [\Lambda]$. This implies
\begin{align}
    \{w_1, \dots, w_N\} & = \bigcup_{j \in [\Lambda]} \mathcal{L}_{j} \\
                             & = \bigcup_{j \in [\Lambda]} \bigcup_{\mathcal{T}_1 \subseteq [\Lambda] : |\mathcal{T}_{1}| = j} \mathcal{B}_{j, \mathcal{T}_1}
\end{align}
where we denote by $\mathcal{B}_{j, \mathcal{T}_{1}}$ the batch containing $\eta_{j}$ files associated with the label $\mathcal{T}_{1}$. Then, mapper $\lambda \in [\Lambda]$ is assigned all batches $\mathcal{B}_{j, \mathcal{T}_{1}}$ having $\lambda \in \mathcal{T}_{1}$ for each $j \in [\Lambda]$, which implies
\begin{equation}
    \mathcal{M}_{\lambda} = \{\mathcal{B}_{j, \mathcal{T}_{1}} : j \in [\Lambda], \mathcal{T}_{1} \subseteq [\Lambda], |\mathcal{T}_{1}| = j, \lambda \in \mathcal{T}_{1}\}.
\end{equation}
The computation load constraint is satisfied, since we have
\begin{align}
    \frac{\sum_{\lambda \in [\Lambda]} |\mathcal{M}_{\lambda}|}{N} & = \frac{\Lambda \sum_{j \in [\Lambda]} \eta_{j} \binom{\Lambda - 1}{j - 1}}{N} \\
    & = \frac{\sum_{j \in [\Lambda]} j \eta_{j} \binom{\Lambda}{j}}{N} \\
    & = \frac{\sum_{j \in [\Lambda]} j \tilde{a}^{j}_{\star}}{N} \leq r
\end{align}
where the last inequality holds under the constraint in~\eqref{eqn: Computation Load Constraint}.

Our goal is to provide an achievable scheme for the max-link communication load. Recalling that we denote by $L$ and $J$ the communication load and the download cost, respectively, we will have
\begin{equation}
    L^{\star}_{\text{max-link}}(r) \leq L_{\text{max-link}, \text{UB}}(r) = \max\left(L, D\right).
\end{equation}

\subsection{Communication Load}

For what concerns the communication load, we can take advantage of the achievable scheme described in \Cref{sec: Achievability Without Download Cost Proof}. Simply, the scheme in \Cref{sec: Achievability Without Download Cost Proof} is applied $\Lambda$ times, one time per partition $\mathcal{L}_{j}$ which is considered as an independent input database. If we denote by $L_{j}$ the communication load when we focus on the part $\mathcal{L}_{j}$, we have that $L_{j}$ is given by
\begin{equation}
    L_{j} = \frac{\binom{\Lambda - \alpha}{j}}{\binom{\Lambda}{j}\left(\binom{j + \alpha}{j} - 1\right)}\frac{\tilde{a}^{j}_{\star}}{N}
\end{equation}
for each $j \in [\Lambda]$. Hence, the overall communication load $L$ is given by
\begin{equation}
    L = \sum_{j \in [\Lambda]}L_{j} = \sum_{j \in [\Lambda]} \frac{\binom{\Lambda - \alpha}{j}}{\binom{\Lambda}{j}\left(\binom{j + \alpha}{j} - 1\right)}\frac{\tilde{a}^{j}_{\star}}{N}.
\end{equation}

\subsection{Download Cost}

We remind that the download cost is defined as
\begin{equation}
    J = \max_{\lambda \in [\Lambda]} \max_{\mathcal{U} \in [\Lambda]_{\alpha} : \lambda \in \mathcal{U}} \frac{R^{\mathcal{U}}_{\lambda}}{QNT}
\end{equation}
where $R^{\mathcal{U}}_{\lambda}$ represents the number of bits which are sent from mapper $\lambda$ to reducer $\mathcal{U}$. This quantity is minimized if the number of bits transmitted over each link connecting a mapper to a reducer is the same. This can be accomplished as follows.

Consider a reducer $\mathcal{U} \in [\Lambda]_{\alpha}$ and a mapper $\lambda \in \mathcal{U}$. According to the file assignment above, mapper $\lambda$ computes the IVs in the set $\mathcal{V}_{\lambda} = \{v_{q, n} : q \in [Q], w_{n} \in \mathcal{M}_{\lambda}\}$. The set $\mathcal{V}_{\lambda}$ can equivalently be written as follows
\begin{equation}
    \mathcal{V}_{\lambda} = \{\mathcal{V}_{\lambda, \mathcal{S}} : i \in [\alpha], \mathcal{S} \subseteq (\mathcal{U} \setminus \{\lambda\}), |\mathcal{S}| = i - 1\}
\end{equation}
where $\mathcal{V}_{\lambda, \mathcal{S}}$ is defined as
\begin{equation}
    \mathcal{V}_{\lambda, \mathcal{S}} \coloneqq \left\{v_{q, n} : q \in [Q], w_{n} \in \mathcal{M}^{\lambda \cup \mathcal{S}}\right\}
\end{equation}
and where $\mathcal{M}^{\lambda \cup \mathcal{S}} \coloneqq \bigcap_{s \in (\lambda \cup \mathcal{S})} \mathcal{M}_{s}$. This simply says that the set $\mathcal{V}_{\lambda, \mathcal{S}}$ contains the IVs which are mapped by mapper $\lambda$ and the $(i - 1)$ mappers in $\mathcal{S}$. Hence, if we evenly split $\mathcal{V}_{\lambda, \mathcal{S}}$ in $i$ segments as follows
\begin{equation}
    \mathcal{V}_{\lambda, \mathcal{S}} = \left(\mathcal{V}_{\lambda, \mathcal{S}, s} : s \in (\lambda \cup \mathcal{S}) \right)
\end{equation}
we simply let mapper $\lambda$ send $\mathcal{V}_{\lambda, \mathcal{S}, \lambda}$. This implies
\begin{align}
    R^{\mathcal{U}}_{\lambda} & = \sum_{i \in [\alpha]} \sum_{\mathcal{S} \subseteq (\mathcal{U} \setminus \{\lambda\}) : |\mathcal{S}| = i - 1 } |\mathcal{V}_{\lambda, \mathcal{S}, \lambda}| \\
                              & = \sum_{i \in [\alpha]} \sum_{\mathcal{S} \subseteq (\mathcal{U} \setminus \{\lambda\}) : |\mathcal{S}| = i - 1 } \frac{|\mathcal{V}_{\lambda, \mathcal{S}}|}{i} \\
                              & = \sum_{i \in [\alpha]} \sum_{\mathcal{S} \subseteq (\mathcal{U} \setminus \{\lambda\}) : |\mathcal{S}| = i - 1 } \sum_{j \in [\Lambda]} \frac{\eta_{j} \binom{\Lambda - \alpha}{j - i}Q T}{i} \\
                              & = \sum_{j \in [\Lambda]} \sum_{i \in [\alpha]} \frac{\eta_{j} \binom{\Lambda - \alpha}{j - i}Q T}{i} \binom{\alpha - 1}{i - 1}
\end{align}
for each $\lambda \in [\Lambda]$ and $\mathcal{U} \in [\Lambda]_{\alpha}$ with $\lambda \in \mathcal{U}$. Hence, we can further write
\begin{align}
    J & = \max_{\lambda \in [\Lambda]} \max_{\mathcal{U} \in [\Lambda]_{\alpha} : \lambda \in \mathcal{U}} \frac{R^{\mathcal{U}}_{\lambda}}{QNT} \\
      & = \frac{1}{QNT} \sum_{j \in [\Lambda]} \sum_{i \in [\alpha]} \frac{\eta_{j} \binom{\Lambda - \alpha}{j - i}Q T}{i} \binom{\alpha - 1}{i - 1} \\
      & = \sum_{j \in [\Lambda]} \sum_{i \in [\alpha]} \frac{\binom{\Lambda - \alpha}{j - i} \binom{\alpha - 1}{i - 1}}{i\binom{\Lambda}{j}} \frac{\tilde{a}^{j}_{\star}}{N}
\end{align}
recalling that $\tilde{a}^{j}_{\star} = \eta_{j} \binom{\Lambda}{j}$ for each $j \in [\Lambda]$. Further, the following lemma holds.

\begin{lemma}\label{lem: Download Cost Lemma}
    For any non-negative integers $\Lambda$, $\alpha$ and $j$, we have
    \begin{equation}
        \sum_{i \in [\alpha]} \frac{\binom{\Lambda - \alpha}{j - i} \binom{\alpha - 1}{i - 1}}{i\binom{\Lambda}{j}} = \frac{\binom{\Lambda}{j} - \binom{\Lambda - \alpha}{j}}{\alpha \binom{\Lambda}{j}}.
    \end{equation}
\end{lemma}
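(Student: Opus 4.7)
The plan is to reduce the identity to Vandermonde's convolution after absorbing the factor $1/i$ into a binomial coefficient. Concretely, I would first multiply both sides by $\alpha\binom{\Lambda}{j}$, so that the claim becomes
\begin{equation}
\sum_{i=1}^{\alpha} \frac{\alpha}{i}\binom{\alpha-1}{i-1}\binom{\Lambda-\alpha}{j-i} \;=\; \binom{\Lambda}{j} - \binom{\Lambda-\alpha}{j}.
\end{equation}

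The key observation is the elementary identity $\frac{\alpha}{i}\binom{\alpha-1}{i-1} = \binom{\alpha}{i}$, which follows immediately from expanding both sides into factorials. Substituting this into the sum rewrites the left-hand side as $\sum_{i=1}^{\alpha} \binom{\alpha}{i}\binom{\Lambda-\alpha}{j-i}$.

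Next I would apply the Vandermonde convolution
\begin{equation}
\sum_{i=0}^{\alpha} \binom{\alpha}{i}\binom{\Lambda-\alpha}{j-i} \;=\; \binom{\Lambda}{j},
\end{equation}
and simply peel off the $i=0$ term, which equals $\binom{\alpha}{0}\binom{\Lambda-\alpha}{j} = \binom{\Lambda-\alpha}{j}$. This gives exactly $\binom{\Lambda}{j} - \binom{\Lambda-\alpha}{j}$, matching the desired right-hand side.

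There is no serious obstacle here, since the whole argument is routine manipulation of binomials; the only thing to be slightly careful about is that the conventions $\binom{n}{k}=0$ for $k<0$ or $k>n$ are used consistently, so that the sum over $i\in[\alpha]$ in the statement and the sum over $i\in\{0,1,\dots,\alpha\}$ in Vandermonde's identity agree on the indices where both are nonzero.
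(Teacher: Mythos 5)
Your proof is correct and follows essentially the same route as the paper's: both absorb the $1/i$ factor via $\frac{\alpha}{i}\binom{\alpha-1}{i-1}=\binom{\alpha}{i}$ and then invoke Vandermonde's convolution, peeling off the $i=0$ term. The only difference is cosmetic --- the paper spells out the case analysis ($j\leq\alpha$ versus $j\geq\alpha$) showing that the sum over $i\in[0:\alpha]$ agrees with the sum over $i\in[0:j]$, a point you correctly flag as the one place where the vanishing-binomial conventions must be checked.
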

\begin{proof}
    The proof is described in \refappendix{sec: Download Cost Lemma Proof}.
\end{proof}

As a consequence, the download cost $J$ is equivalently given by
\begin{align}
    J & = \sum_{j \in [\Lambda]} \frac{\binom{\Lambda}{j} - \binom{\Lambda - \alpha}{j}}{\alpha \binom{\Lambda}{j}}\frac{\tilde{a}^{j}_{\star}}{N} \\
      & = \sum_{j \in [\Lambda]} \frac{\binom{\Lambda}{\alpha} - \binom{\Lambda - j}{\alpha}}{\alpha \binom{\Lambda}{\alpha}}\frac{\tilde{a}^{j}_{\star}}{N}.
\end{align}

\subsection{Max-Link Communication Load}
Since we have now the expressions for both $L$ and $J$, we can write explicitly the achievable max-link communication load as follows
\begin{align}
    L_{\text{max-link}, \text{UB}}(r) & = \max\left(L, D\right) \\
                                      & = \max\left(\sum_{j \in [\Lambda]} \frac{\binom{\Lambda - \alpha}{j}}{\binom{\Lambda}{j}\left(\binom{j + \alpha}{j} - 1\right)}\frac{\tilde{a}^{j}_{\star}}{N}, \sum_{j \in [\Lambda]} \frac{\binom{\Lambda}{\alpha} - \binom{\Lambda - j}{\alpha}}{\alpha \binom{\Lambda}{\alpha}}\frac{\tilde{a}^{j}_{\star}}{N}\right).
\end{align}
The expression above coincides with the achievable expression in \Cref{thm: Achievable Bound With Download Cost}. The proof is concluded. \qed

\section{Proof of Converse Bound in \texorpdfstring{\Cref{thm: Converse Bound With Download Cost}}{Theorem~\ref{thm: Converse Bound With Download Cost}}}\label{sec: Converse With Download Cost Proof}

We quickly recall that $L_{\mathcal{M}}$ denotes the communication load under the file assignment $\mathcal{M} = (\mathcal{M}_1, \dots, \mathcal{M}_\Lambda)$. Then, we denote by $J_{\mathcal{M}}$ and by $L_{\text{max-link}, \mathcal{M}}(r) = \max\{L_{\mathcal{M}}, J_{\mathcal{M}}\}$ the download cost and the max-link communication load, respectively, under file assignment $\mathcal{M}$.

\subsection{Lower Bound for a Given File Assignment}

From \Cref{sec: Converse Without Download Cost Proof} we know that the inequality
\begin{equation}
    L_{\mathcal{M}} \geq \sum_{j \in [\Lambda]} \frac{\binom{\Lambda}{\alpha + j}}{\binom{\Lambda}{\alpha}\binom{\Lambda}{j}} \frac{\tilde{a}^j_{\mathcal{M}}}{N}
\end{equation}
holds. Thus, we can write
\begin{align}
    L_{\text{max-link}, \mathcal{M}} (r) & = \max\{L_{\mathcal{M}}, J_{\mathcal{M}}\} \\
                                     & \geq \max\left\{\sum_{j \in [\Lambda]} \frac{\binom{\Lambda}{\alpha + j}}{\binom{\Lambda}{\alpha}\binom{\Lambda}{j}} \frac{\tilde{a}^j_{\mathcal{M}}}{N}, J_{\mathcal{M}}\right\}.
\end{align}
In the following, we wish to develop a lower bound on $J_{\mathcal{M}}$. Starting from the definition of the download cost, we have
\begin{subequations}
    \begin{align}
        J_{\mathcal{M}} & = \max_{\lambda \in [\Lambda]} \max_{\mathcal{U} \in [\Lambda]_{\alpha}: \lambda \in \mathcal{U}} \frac{R^{\mathcal{U}}_{\lambda}}{QNT} \\
                        & \geq \max_{\lambda \in [\Lambda]} \frac{1}{\binom{\Lambda - 1}{\alpha - 1}QNT} \sum_{\mathcal{U} \in [\Lambda]_{\alpha}: \lambda \in \mathcal{U}} R^{\mathcal{U}}_{\lambda} \\
                        & \geq \frac{1}{\Lambda\binom{\Lambda - 1}{\alpha - 1}QNT} \sum_{\lambda \in [\Lambda]} \sum_{\mathcal{U} \in [\Lambda]_{\alpha}: \lambda \in \mathcal{U}} R^{\mathcal{U}}_{\lambda} \\
                        & = \frac{1}{\alpha \binom{\Lambda}{\alpha}QNT} \sum_{\mathcal{U} \in [\Lambda]_{\alpha} } \sum_{\lambda \in \mathcal{U}} R^{\mathcal{U}}_{\lambda} \\
                        & = \frac{1}{\alpha \binom{\Lambda}{\alpha}QNT} \sum_{\mathcal{U} \in [\Lambda]_{\alpha} } R^{\mathcal{U}}
    \end{align}
\end{subequations}
where $R^{\mathcal{U}}$ is defined as
\begin{equation}
    R^{\mathcal{U}} \coloneqq \sum_{\lambda \in \mathcal{U}} R^{\mathcal{U}}_{\lambda}
\end{equation}
to represent the overall number of bits received by reducer $\mathcal{U} \in [\Lambda]_{\alpha}$. Now, since each reducer $\mathcal{U}$ is expected to receive all the IVs mapped by the mappers in $\mathcal{U}$, we have
\begin{equation}
    R^{\mathcal{U}} \geq H(C_{\mathcal{U}})
\end{equation}
where we recall that $C_{\mathcal{U}} = \{V_{q, n} : q \in [Q], w_n \in \mathcal{M}_{\mathcal{U}}\}$ from \Cref{sec: Converse Without Download Cost Proof}. This means that we can further write
\begin{subequations}
    \begin{align}
        J_{\mathcal{M}} & \geq \frac{1}{\alpha \binom{\Lambda}{\alpha}QNT} \sum_{\mathcal{U} \in [\Lambda]_{\alpha} } R^{\mathcal{U}} \\
                        & \geq \frac{1}{\alpha \binom{\Lambda}{\alpha}QNT} \sum_{\mathcal{U} \in [\Lambda]_{\alpha} } H(C_{\mathcal{U}}) \\
                        & = \frac{1}{\alpha \binom{\Lambda}{\alpha}QNT} \sum_{\mathcal{U} \in [\Lambda]_{\alpha} } \sum_{\mathcal{T} \subseteq [\Lambda] : \mathcal{T} \cap \mathcal{U} \neq \emptyset} \tilde{a}^{\mathcal{T}} Q T \\
                        & = \frac{1}{\alpha \binom{\Lambda}{\alpha}N} \sum_{j \in [\Lambda]} \sum_{\mathcal{T} \subseteq [\Lambda] : |\mathcal{T}| = j} \left( \binom{\Lambda}{\alpha} - \binom{\Lambda - j}{\alpha} \right) \tilde{a}^{\mathcal{T}} \\
                        & = \sum_{j \in [\Lambda]} \frac{\binom{\Lambda}{\alpha} - \binom{\Lambda - j}{\alpha}}{\alpha \binom{\Lambda}{\alpha}} \frac{\tilde{a}^{j}_{\mathcal{M}}}{N}
    \end{align}
\end{subequations}
recalling that $\tilde{a}^{j}_{\mathcal{M}} = \sum_{\mathcal{T} \subseteq [\Lambda] : |\mathcal{T}| = j}  \tilde{a}^{\mathcal{T}}$. To conclude, for a given file assignment $\mathcal{M}$, the max-link communication load is lower bounded as 
\begin{align}
    L_{\text{max-link}, \mathcal{M}} (r) & \geq \max\left\{\sum_{j \in [\Lambda]} \frac{\binom{\Lambda}{\alpha + j}}{\binom{\Lambda}{\alpha}\binom{\Lambda}{j}} \frac{\tilde{a}^j_{\mathcal{M}}}{N}, J_{\mathcal{M}}\right\} \\
                                         & \geq \max\left\{\sum_{j \in [\Lambda]} \frac{\binom{\Lambda}{\alpha + j}}{\binom{\Lambda}{\alpha}\binom{\Lambda}{j}} \frac{\tilde{a}^j_{\mathcal{M}}}{N}, \sum_{j \in [\Lambda]} \frac{\binom{\Lambda}{\alpha} - \binom{\Lambda - j}{\alpha}}{\alpha \binom{\Lambda}{\alpha}} \frac{\tilde{a}^{j}_{\mathcal{M}}}{N} \right\} \\
                                        & \geq \frac{1}{2}\sum_{j \in [\Lambda]} \left( \frac{\binom{\Lambda}{\alpha + j}}{\binom{\Lambda}{\alpha}\binom{\Lambda}{j}} + \frac{\binom{\Lambda}{\alpha} - \binom{\Lambda - j}{\alpha}}{\alpha \binom{\Lambda}{\alpha}} \right) \frac{\tilde{a}^j_{\mathcal{M}}}{N}.
\end{align}

\subsection{Lower Bound Over All Possible File Assignments}

Our aim is to develop a bound on the max-link communication load under any possible file assignment, namely, we are looking for the smallest $L_{\text{max-link}, \mathcal{M}}(r)$ across all file assignments $\mathcal{M}$ such that $|\mathcal{M}_1| + \dots + |\mathcal{M}_{\Lambda}| \leq rN$ for a given computation load $r \in [K]$. Since each file assignment $\mathcal{M}$ such that $|\mathcal{M}_1| + \dots + |\mathcal{M}_{\Lambda}| \leq rN$ also implies that $\tilde{a}^j_{\mathcal{M}} \geq 0$ for each $j \in [\Lambda]$, as well as implies that $\sum_{j \in [\Lambda]}\tilde{a}^j_{\mathcal{M}} = N$ and that $\sum_{j \in [\Lambda]}j \tilde{a}^j_{\mathcal{M}} \leq rN$, the max-link load $L^{\star}_{\text{max-link}}(r)$ is lower bounded by the solution to the following linear program
\begin{subequations}
    \begin{alignat}{2}
            & \min_{\tilde{\bm{a}}_{\mathcal{M}}} & \quad & \frac{1}{2} \sum_{j \in [\Lambda]} \left(\frac{\binom{\Lambda}{\alpha + j}}{\binom{\Lambda}{\alpha}\binom{\Lambda}{j}} + \frac{\binom{\Lambda}{\alpha} - \binom{\Lambda - j}{\alpha}}{\alpha \binom{\Lambda}{\alpha}} \right) \frac{\tilde{a}^{j}_{\mathcal{M}}}{N} \\
            & \textnormal{subject to}  &  & \tilde{a}^{j}_{\mathcal{M}} \geq 0, \quad \forall j \in [\Lambda] \\
            & & & \sum_{j \in [\Lambda]} \frac{\tilde{a}^{j}_{\mathcal{M}}}{N} = 1 \\
            & & & \sum_{j \in [\Lambda]} j\frac{\tilde{a}^{j}_{\mathcal{M}}}{N} \leq r
    \end{alignat}
\end{subequations}
where $\tilde{\bm{a}}_{\mathcal{M}} = (\tilde{a}^{1}_{\mathcal{M}}, \dots, \tilde{a}^{\Lambda}_{\mathcal{M}})$ is the control variable. The proof is concluded. \qed

\section{Conclusions}\label{sec: Conclusions}

In this work, we introduced multi-access distributed computing, a novel system model that generalizes the original CDC setting by considering mappers and reducers as distinct entities, and by considering each reducer to be connected to multiple mappers through a network topology. We focused on the MADC model with combinatorial topology, which implies $\Lambda$ mappers and $K = \binom{\Lambda}{\alpha}$ reducers, so that there is a reducer for any set of $\alpha$ mappers. Neglecting at first the download cost from mappers to reducers and so focusing only on the inter-reducer communication load, we proposed a novel coded scheme which, together with an information-theoretic converse, characterizes the optimal communication load within a constant multiplicative gap of $1.5$. Subsequently, we jointly considered the setting which keeps into account the download cost and for such scenario we characterized the optimal max-link communication load within a multiplicative factor of $4$. We point out that the here introduced achievable shuffling scheme --- which generalizes the original coded scheme in~\cite{Li2018FundamentalTradeoffComputation} (corresponding to the case $\alpha = 1$) --- offers also unparalleled coding gains. As an outcome of this gain, we have the interesting occurrence that our scheme guarantees a smaller communication load when $\alpha > 1$, capitalizing on the multi-access nature of the MADC model, even though the number of reducers is increased.

Interesting future directions could include the study of the here proposed MADC setting when mappers and reducers have heterogeneous computational resources. A careful study of other multi-access network topologies is also another challenging research direction. Reflecting a design freedom, the search for the best possible topology, for a given computation load, remains a very pertinent open problem in distributed computing.

\appendices

\section{Proof of \texorpdfstring{\Cref{cor: Multi-Access Degree Improvement}}{Corollary~\ref{cor: Multi-Access Degree Improvement}}}\label{cor: Multi-Access Degree Improvement Proof}

We wish to prove that, for fixed computation load $r$, the achievable performance in \Cref{thm: Achievable Bound Without Download Cost} decreases for increasing $\alpha$. To do so, it is enough to prove that
\begin{equation}
    s_{\alpha} > s_{\alpha + 1}
\end{equation}
where $s_{\alpha}$ is defined as
\begin{equation}
    s_{\alpha} \coloneqq \frac{\binom{\Lambda - \alpha}{r}}{\binom{\Lambda}{r}\left( \binom{r + \alpha}{r} - 1 \right)}.
\end{equation}
Toward this, we can see that
\begin{align}
    s_{\alpha + 1} & = \frac{\binom{\Lambda - \alpha - 1}{r}}{\binom{\Lambda}{r}\left( \binom{r + \alpha + 1}{r} - 1 \right)} \\
                   & = \frac{\frac{\Lambda - \alpha - r}{\Lambda - \alpha}\binom{\Lambda - \alpha}{r}}{\binom{\Lambda}{r}\left( \frac{r + \alpha + 1}{\alpha + 1} \binom{r + \alpha}{r} - 1 \right)} \\
                   & = \frac{\left(1 - \frac{r}{\Lambda - \alpha} \right)\binom{\Lambda - \alpha}{r}}{\binom{\Lambda}{r}\left( \left( 1 + \frac{r}{\alpha + 1} \right) \binom{r + \alpha}{r} - 1 \right)}.
\end{align}
Since $r/(\Lambda - \alpha) > 0$ and $r/(\alpha + 1) > 0$, it holds that $1 - r/(\Lambda - \alpha) < 1$ and $1 + r/(\alpha + 1) > 1$. Consequently, we can write
\begin{align}
    s_{\alpha + 1} & = \frac{\left(1 - \frac{r}{\Lambda - \alpha} \right)\binom{\Lambda - \alpha}{r}}{\binom{\Lambda}{r}\left( \left( 1 + \frac{r}{\alpha + 1} \right) \binom{r + \alpha}{r} - 1 \right)} \\
                   & < \frac{\binom{\Lambda - \alpha}{r}}{\binom{\Lambda}{r}\left(\binom{r + \alpha}{r} - 1 \right)} \\
                   & = s_{\alpha}
\end{align}
showing in this way that $s_{\alpha + 1} < s_{\alpha}$. This concludes the proof. \qed

\section{Proof of Order Optimality in \texorpdfstring{\Cref{thm: Order Optimality Without Download Cost}}{Theorem~\ref{thm: Order Optimality Without Download Cost}}}\label{sec: Order Optimality Without Download Cost Proof}

To prove the order optimality result in \Cref{thm: Order Optimality Without Download Cost}, we need to upper bound the ratio $L_{\textnormal{UB}}(r)/L^{\star}(r)$ for each $r \in [\Lambda - \alpha + 1]$. We start by noting that the following
\begin{align}
    \frac{L_{\textnormal{UB}}(r)}{L^{\star}(r)} & \leq \frac{L_{\textnormal{UB}}(r)}{L_{\textnormal{LB}}(r)} \\
                                             & = \frac{\binom{\Lambda - \alpha}{r}}{\cancel{\binom{\Lambda}{r}}\left( \binom{r + \alpha}{r} - 1 \right)} \frac{\cancel{\binom{\Lambda}{r}}\binom{\Lambda}{\alpha}}{\binom{\Lambda}{r + \alpha}} \\
                                             & = \frac{\binom{\Lambda - \alpha}{r}}{\left( \binom{r + \alpha}{r} - 1 \right)} \frac{\binom{\Lambda}{\alpha}}{\binom{\Lambda}{r + \alpha}} \\
                                             & = \frac{\binom{r + \alpha}{r}}{\binom{r + \alpha}{r} - 1} \eqqcolon b_r
\end{align}
holds. Further, we notice that $b_r$ is decreasing in $r$, since
\begin{align}
    b_{r + 1} & = \frac{\binom{r + 1 + \alpha}{r + 1}}{\binom{r + 1 + \alpha}{r + 1} - 1} \\
              & = \frac{1}{1 - \frac{1}{\binom{r + 1 + \alpha}{r + 1}}} \\
              & = \frac{1}{1 - \frac{r + 1}{r + 1 + \alpha}\frac{1}{\binom{r + \alpha}{r}}} \\
              & < \frac{1}{1 - \frac{1}{\binom{r + \alpha}{r}}} \\
              & = b_{r}
\end{align}
for each $r \in \mathbb{N}^{+}$. Thus, considering that $r \in [\Lambda - \alpha + 1]$, we can further write
\begin{align}
    \frac{L_{\textnormal{UB}}(r)}{L^{\star}(r)} & \leq \frac{\binom{r + \alpha}{r}}{\binom{r + \alpha}{r} - 1} \\
                                              & \leq \frac{\alpha + 1}{\alpha}
\end{align}
where the last term is upper bounded when $\alpha$ is set to its minimum value. Now, after neglecting the value $\alpha = 1$ --- in which case the corresponding achievable performance in \Cref{thm: Achievable Bound Without Download Cost} was already proved to be exactly optimal in~\cite{Li2018FundamentalTradeoffComputation} --- we focus on the case where $\alpha \in [2 : \Lambda]$, which implies that
\begin{align}
    \frac{L_{\textnormal{UB}}(r)}{L^{\star}(r)} & \leq \frac{\alpha + 1}{\alpha} \\
                                              & \leq \frac{3}{2}.
\end{align}
The proof is concluded. \qed

\section{Proof of Order Optimality in \texorpdfstring{\Cref{thm: Order Optimality With Download Cost}}{Theorem~\ref{thm: Order Optimality With Download Cost}}}\label{sec: Order Optimality With Download Cost Proof}

From \Cref{thm: Achievable Bound With Download Cost} we know that $L^{\star}_{\text{max-link}}(r)$ is upper bounded as
\begin{align}
    L^{\star}_{\text{max-link}}(r) & \leq L_{\textnormal{max-link}, \textnormal{UB}}(r) \\ 
                                   & = \max\left(\sum_{j \in [\Lambda]} \frac{\binom{\Lambda - \alpha}{j}}{\binom{\Lambda}{j}\left(\binom{j + \alpha}{j} - 1\right)} \frac{\tilde{a}^{j}_{\star}}{N}, \sum_{j \in [\Lambda]} \frac{\binom{\Lambda}{\alpha} - \binom{\Lambda - j}{\alpha}}{\alpha \binom{\Lambda}{\alpha}} \frac{\tilde{a}^{j}_{\star}}{N} \right) \\
                                   & \leq \sum_{j \in [\Lambda]} \left( \frac{\binom{\Lambda - \alpha}{j}}{\binom{\Lambda}{j}\left(\binom{j + \alpha}{j} - 1\right)} + \frac{\binom{\Lambda}{\alpha} - \binom{\Lambda - j}{\alpha}}{\alpha \binom{\Lambda}{\alpha}} \right) \frac{\tilde{a}^{j}_{\star}}{N} \\
                                   & = \sum_{j \in [\Lambda]} c_{j} \frac{\tilde{a}^{j}_{\star}}{N}
\end{align}
where the coefficient $c_{j}$ is defined as
\begin{equation}
    c_{j} \coloneqq \frac{\binom{\Lambda - \alpha}{j}}{\binom{\Lambda}{j}\left(\binom{j + \alpha}{j} - 1\right)} + \frac{\binom{\Lambda}{\alpha} - \binom{\Lambda - j}{\alpha}}{\alpha \binom{\Lambda}{\alpha}}.
\end{equation}
At the same time, we know from \Cref{thm: Converse Bound With Download Cost} that $L^{\star}_{\text{max-link}}(r)$ is lower bounded as
\begin{align}
    L^{\star}_{\text{max-link}}(r) & \geq L_{\textnormal{max-link}, \textnormal{LB}}(r) \\
                                   & = \frac{1}{2} \sum_{j \in [\Lambda]} \left(\frac{\binom{\Lambda}{\alpha + j}}{\binom{\Lambda}{\alpha}\binom{\Lambda}{j}} + \frac{\binom{\Lambda}{\alpha} - \binom{\Lambda - j}{\alpha}}{\alpha \binom{\Lambda}{\alpha}} \right) \frac{\tilde{a}^{j}_{\star}}{N} \\
                                   & = \frac{1}{2} \sum_{j \in [\Lambda]} d_{j} \frac{\tilde{a}^{j}_{\star}}{N}
\end{align}
where the coefficient $d_{j}$ is defined as
\begin{equation}
    d_{j} \coloneqq \frac{\binom{\Lambda}{\alpha + j}}{\binom{\Lambda}{\alpha}\binom{\Lambda}{j}} + \frac{\binom{\Lambda}{\alpha} - \binom{\Lambda - j}{\alpha}}{\alpha \binom{\Lambda}{\alpha}}.
\end{equation}
Hence, we can evaluate the gap to optimality from the ratio $L_{\textnormal{max-link}, \textnormal{UB}}(r)/L_{\textnormal{max-link}, \textnormal{LB}}(r)$. In particular, we have
\begin{align}
    \frac{L_{\textnormal{max-link}, \textnormal{UB}}(r)}{L_{\textnormal{max-link}, \textnormal{LB}}(r)} & \leq 2\frac{\sum_{j \in [\Lambda]} c_{j} \tilde{a}^{j}_{\star}/N}{\sum_{j \in [\Lambda]} d_{j} \tilde{a}^{j}_{\star}/N} \\ 
    & = 2 \frac{\sum_{j \in [\Lambda]: \tilde{a}^{j}_{\star} > 0} c_{j} \tilde{a}^{j}_{\star}/N}{\sum_{j \in [\Lambda]: \tilde{a}^{j}_{\star} > 0} d_{j} \tilde{a}^{j}_{\star}/N} \\
    & \leq 2 \max_{j \in [\Lambda]: \tilde{a}^{j}_{\star} > 0} \frac{c_{j} \tilde{a}^{j}_{\star}/N}{d_{j} \tilde{a}^{j}_{\star}/N} \\
    & = 2 \max_{j \in [\Lambda]: \tilde{a}^{j}_{\star} > 0 } \frac{c_{j}}{d_{j}} \\
    & \leq 2 \max_{j \in [\Lambda]} \frac{c_{j}}{d_{j}} \\
    & = 2 \max \left(\max_{j \in [\Lambda - \alpha]} \frac{c_{j}}{d_{j}}, \max_{j \in [\Lambda - \alpha + 1 : \Lambda]} \frac{c_{j}}{d_{j}} \right).
\end{align}
Now, we can see that $c_{j} = d_{j}$ when $j > \Lambda - \alpha$. Else, when $j \in [\Lambda - \alpha]$, we have
\begin{align}
    \frac{c_{j}}{d_{j}} & = \frac{\frac{\binom{\Lambda - \alpha}{j}}{\binom{\Lambda}{j}\left(\binom{j + \alpha}{j} - 1\right)} + \frac{\binom{\Lambda}{\alpha} - \binom{\Lambda - j}{\alpha}}{\alpha \binom{\Lambda}{\alpha}}}{\frac{\binom{\Lambda}{\alpha + j}}{\binom{\Lambda}{\alpha}\binom{\Lambda}{j}} + \frac{\binom{\Lambda}{\alpha} - \binom{\Lambda - j}{\alpha}}{\alpha \binom{\Lambda}{\alpha}}} \\
    & \leq \max \left(\frac{\frac{\binom{\Lambda - \alpha}{j}}{\binom{\Lambda}{j}\left(\binom{j + \alpha}{j} - 1\right)}}{\frac{\binom{\Lambda}{\alpha + j}}{\binom{\Lambda}{\alpha}\binom{\Lambda}{j}}}, \frac{\frac{\binom{\Lambda}{\alpha} - \binom{\Lambda - j}{\alpha}}{\alpha \binom{\Lambda}{\alpha}}}{\frac{\binom{\Lambda}{\alpha} - \binom{\Lambda - j}{\alpha}}{\alpha \binom{\Lambda}{\alpha}}}  \right) \\
    & = \max \left(\frac{\binom{\Lambda - \alpha}{j}}{\left(\binom{j + \alpha}{j} - 1\right)}\frac{\binom{\Lambda}{\alpha}}{\binom{\Lambda}{\alpha + j}}, 1\right).
\end{align}
Since we know from \refappendix{sec: Order Optimality Without Download Cost Proof} that
\begin{equation}
    \frac{\binom{\Lambda - \alpha}{j}}{\left(\binom{j + \alpha}{j} - 1\right)}\frac{\binom{\Lambda}{\alpha}}{\binom{\Lambda}{\alpha + j}} \leq \frac{\alpha + 1}{\alpha}
\end{equation}
we can further write
\begin{align}
    \max_{j \in [\Lambda - \alpha]} \frac{c_{j}}{d_{j}} & \leq \max_{j \in [\Lambda - \alpha]} \max \left(\frac{\binom{\Lambda - \alpha}{j}}{\left(\binom{j + \alpha}{j} - 1\right)}\frac{\binom{\Lambda}{\alpha}}{\binom{\Lambda}{\alpha + j}}, 1\right)  \\
    & \leq \max \left( \frac{\alpha + 1}{\alpha}, 1 \right).
\end{align}
Hence, we can conclude that
\begin{align}
    \frac{L_{\textnormal{max-link}, \textnormal{UB}}(r)}{L_{\textnormal{max-link}, \textnormal{LB}}(r)} & \leq 2 \max \left(\max_{j \in [\Lambda - \alpha]} \frac{c_{j}}{d_{j}}, \max_{j \in [\Lambda - \alpha + 1 : \Lambda]} \frac{c_{j}}{d_{j}} \right) \\
    & \leq 2 \max \left( \max \left( \frac{\alpha + 1}{\alpha}, 1\right) , 1\right) \\
    & \leq 2 \max \left( \max \left( 2, 1\right) , 1\right) \\
    & = 4.
\end{align}
The proof is concluded. \qed

\section{Proof of \texorpdfstring{\Cref{lem: Lower Bound Lemma}}{Lemma~\ref{lem: Lower Bound Lemma}}}\label{sec: Lower Bound Lemma Proof}

Consider a permutation $\bm{c} = (c_1, \dots, c_{\Lambda})$ of the set $[\Lambda]$. We know that $H(D_{\mathcal{U}} \mid X_{[\Lambda]_\alpha}, C_{\mathcal{U}}) = 0$ holds for any valid shuffle scheme and for each $\mathcal{U} \in [\Lambda]_{\alpha}$. Given this, for $\mathcal{U}^{\alpha} = \{c_1, \dots, c_{\alpha}\}$ we can write
\begin{align}
    H(X_{[\Lambda]_\alpha}) & \geq H(X_{[\Lambda]_\alpha} \mid C_{\mathcal{U}^{\alpha}}) \label{eqn: Conditioning 1} \\
             & =  H(X_{[\Lambda]_\alpha}, D_{\mathcal{U}^{\alpha}} \mid C_{\mathcal{U}^{\alpha}}) - H(D_{\mathcal{U}^{\alpha}} \mid X_{[\Lambda]_\alpha}, C_{\mathcal{U}^{\alpha}}) \\
             & = H(X_{[\Lambda]_\alpha}, D_{\mathcal{U}^{\alpha}} \mid C_{\mathcal{U}^{\alpha}}) \label{eqn: Decodability 1} \\
             & = H(D_{\mathcal{U}^{\alpha}} \mid C_{\mathcal{U}^{\alpha}}) + H(X_{[\Lambda]_\alpha} \mid C_{\mathcal{U}^{\alpha}}, D_{\mathcal{U}^{\alpha}}) \\
             & = H(\mathcal{D}_{\mathcal{U}^{\alpha}} \mid \mathcal{C}_{\mathcal{U}^{\alpha}}) + H(X_{[\Lambda]_\alpha} \mid \mathcal{Y}_{\alpha})
\end{align}
where \eqref{eqn: Conditioning 1} follows from the fact that conditioning does not increase entropy, and where \eqref{eqn: Decodability 1} holds because of the decodability condition $H(D_{\mathcal{U}} \mid X_{[\Lambda]_\alpha}, C_{\mathcal{U}}) = 0$ for each $\mathcal{U} \in [\Lambda]_{\alpha}$. Similarly, for each $i \in [\alpha + 1 : \Lambda]$ we can write
\begin{align}
    H(X_{[\Lambda]_\alpha} \mid \mathcal{Y}_{i - 1}) & \geq H(X_{[\Lambda]_\alpha} \mid \mathcal{C}_{i}, \mathcal{Y}_{i - 1}) \label{eqn: Conditioning 2} \\
                                   & = H(X_{[\Lambda]_\alpha}, \mathcal{D}_i \mid \mathcal{C}_{i}, \mathcal{Y}_{i - 1}) - H(\mathcal{D}_i \mid X_{[\Lambda]_\alpha}, \mathcal{C}_{i}, \mathcal{Y}_{i - 1}) \\
                                   & = H(X_{[\Lambda]_\alpha}, \mathcal{D}_i \mid \mathcal{C}_{i}, \mathcal{Y}_{i - 1}) \label{eqn: Decodability 2} \\
                                   & = H(\mathcal{D}_i \mid \mathcal{C}_{i}, \mathcal{Y}_{i - 1}) + H(X_{[\Lambda]_\alpha} \mid \mathcal{D}_i, \mathcal{C}_{i}, \mathcal{Y}_{i - 1}) \\
                                   & = H(\mathcal{D}_i \mid \mathcal{C}_{i}, \mathcal{Y}_{i - 1}) + H(X_{[\Lambda]_\alpha} \mid \mathcal{Y}_{i})
\end{align}
where again \eqref{eqn: Conditioning 2} is true as conditioning does not increase entropy, and where \eqref{eqn: Decodability 2} follows because
\begin{align}
    H(\mathcal{D}_i \mid X_{[\Lambda]_\alpha}, \mathcal{C}_{i}, \mathcal{Y}_{i - 1}) & \leq H(\mathcal{D}_i \mid X_{[\Lambda]_\alpha}, \mathcal{C}_{i}) \\
    & \leq \sum_{\mathcal{U}^i \subseteq \{c_1, \dots, c_i\} : |\mathcal{U}^i| = \alpha, c_i \in \mathcal{U}^i} H(D_{\mathcal{U}^i} \mid X_{[\Lambda]_\alpha}, C_{\mathcal{U}^i}) \\
    & = 0
\end{align}
due to the independence of intermediate values and the decodability condition. Considering that $H(X_{[\Lambda]_\alpha} \mid \mathcal{Y}_{\Lambda}) = 0$, we can use iteratively the above to obtain
\begin{equation}
    H(X_{[\Lambda]_\alpha}) \geq \sum_{i \in [\alpha : \Lambda]} H(\mathcal{D}_{i} \mid \mathcal{C}_{i}, \mathcal{Y}_{i - 1}).
\end{equation}
Further, we notice that $L_{\mathcal{M}} \geq H(X_{[\Lambda]_\alpha})/QNT$. This concludes the proof. \qed

\section{Proof of \texorpdfstring{\Cref{lem: Download Cost Lemma}}{Lemma~\ref{lem: Download Cost Lemma}}}\label{sec: Download Cost Lemma Proof}

First, we rewrite the equality in \Cref{lem: Download Cost Lemma} as
\begin{align}
    \sum_{i \in [\alpha]} \frac{\binom{\Lambda - \alpha}{j - i} \binom{\alpha - 1}{i - 1}}{i\binom{\Lambda}{j}} & = \frac{\binom{\Lambda}{j} - \binom{\Lambda - \alpha}{j}}{\alpha \binom{\Lambda}{j}} \\
    \sum_{i \in [\alpha]} \binom{\Lambda - \alpha}{j - i} \binom{\alpha}{i} & = \binom{\Lambda}{j} - \binom{\Lambda - \alpha}{j} \\
    \sum_{i \in [0 : \alpha]} \binom{\Lambda - \alpha}{j - i} \binom{\alpha}{i} & = \binom{\Lambda}{j}.
\end{align}
Thus, proving the equality in \Cref{lem: Download Cost Lemma} is equivalent to showing that the following equality
\begin{equation}
    \sum_{i \in [0 : \alpha]} \binom{\Lambda - \alpha}{j - i} \binom{\alpha}{i} = \binom{\Lambda}{j}
\end{equation}
holds. From Vandermonde's identity, we know that 
\begin{equation}
    \sum_{i \in [0 : j]} \binom{\Lambda - \alpha}{j - i} \binom{\alpha}{i} = \binom{\Lambda}{j}
\end{equation}
and so it suffices to show that
\begin{equation}
    \sum_{i \in [0 : \alpha]} \binom{\Lambda - \alpha}{j - i} \binom{\alpha}{i} = \sum_{i \in [0 : j]} \binom{\Lambda - \alpha}{j - i} \binom{\alpha}{i}.
\end{equation}

Consider first the case $j \leq \alpha$. This means that we can write
\begin{align}
    \sum_{i \in [0 : \alpha]} \binom{\Lambda - \alpha}{j - i} \binom{\alpha}{i} & = \sum_{i \in [0 : j]} \binom{\Lambda - \alpha}{j - i} \binom{\alpha}{i} + \underbrace{\sum_{i \in [j + 1 : \alpha]} \binom{\Lambda - \alpha}{j - i} \binom{\alpha}{i}}_{= 0} \\
    & = \sum_{i \in [0 : j]} \binom{\Lambda - \alpha}{j - i} \binom{\alpha}{i}
\end{align}
where $\sum_{i \in [j + 1 : \alpha]} \binom{\Lambda - \alpha}{j - i} \binom{\alpha}{i} = 0$ since we have $\binom{\Lambda - \alpha}{j - i} = 0$ for $i \in [j + 1 : \alpha]$. Similarly, if we consider $j \geq \alpha$, we have
\begin{align}
    \sum_{i \in [0 : j]} \binom{\Lambda - \alpha}{j - i} \binom{\alpha}{i} & = \sum_{i \in [0 : \alpha]} \binom{\Lambda - \alpha}{j - i} \binom{\alpha}{i} + \underbrace{\sum_{i \in [\alpha + 1 : j]} \binom{\Lambda - \alpha}{j - i} \binom{\alpha}{i}}_{= 0} \\
    & = \sum_{i \in [0 : \alpha]} \binom{\Lambda - \alpha}{j - i} \binom{\alpha}{i}
\end{align}
where $\sum_{i \in [\alpha + 1 : j]} \binom{\Lambda - \alpha}{j - i} \binom{\alpha}{i} = 0$ since we have $\binom{\alpha}{i} = 0$ for $i \in [\alpha + 1 : j]$. Hence, for any value of $j \in [0 : \Lambda]$, we can conclude that
\begin{equation}
    \sum_{i \in [0 : \alpha]} \binom{\Lambda - \alpha}{j - i} \binom{\alpha}{i} = \sum_{i \in [0 : j]} \binom{\Lambda - \alpha}{j - i} \binom{\alpha}{i} = \binom{\Lambda}{j}.
\end{equation}
The proof is concluded. \qed

\bibliographystyle{IEEEtran}
\bibliography{references}

\end{document}